\newcommand{\rev}[1]{#1^R}
\newcommand{\Prefix}{\mathsf{Prefix}}
\newcommand{\Substr}{\mathsf{Substr}}
\newcommand{\Suffix}{\mathsf{Suffix}}
\newcommand{\occ}{\mathsf{occ}}
\newcommand{\MUS}{\mathsf{MUS}}
\newcommand{\MAW}{\mathsf{MAW}}
\newcommand{\MRW}{\mathsf{MRW}}
\newcommand{\EBF}{\mathsf{EBF}}
\newcommand{\MR}{\mathsf{MRep}}
\newcommand{\CDAWG}{\mathsf{CDAWG}}
\newcommand{\DAWG}{\mathsf{DAWG}}
\newcommand{\slink}{\mathsf{slink}}
\newcommand{\str}{\mathsf{str}}
\newcommand{\LPT}{\mathsf{LPT}}
\newcommand{\LPTplus}{\mathsf{LPT}^+}
\newcommand{\VCDAWG}{\mathsf{V}}
\newcommand{\ECDAWG}{\mathsf{E}}
\newcommand{\righte}{\mathsf{er}}
\newcommand{\lefte}{\mathsf{el}}
\newcommand{\emin}{\mathsf{e}_{\min}}
\newcommand{\children}{\mathsf{child}}
\newtheorem{theorem}{Theorem}
\newtheorem{corollary}{Corollary}
\newtheorem{lemma}{Lemma}
\theoremstyle{definition}
\begin{document}

\title{Computing Minimal Absent Words and Extended Bispecial Factors with CDAWG Space}

\author[1]{Shunsuke~Inenaga}
\author[2]{Takuya~Mieno}
\author[3]{Hiroki~Arimura}
\author[1]{Mitsuru~Funakoshi}
\author[1]{Yuta~Fujishige}

\affil[1]{Department of Informatics, Kyushu University, Japan}
\affil[2]{Department of Computer and Network Engineering, University of Electro-Communications, Japan}
\affil[3]{Graduate School of IST, Hokkaido University, Japan}

\date{}
\maketitle

\begin{abstract}
  A string $w$ is said to be a \emph{minimal absent word} (\emph{MAW})
  for a string $S$
  if $w$ does not occur in $S$ and any proper substring of $w$ occurs in $S$.
  We focus on non-trivial MAWs which are of length at least 2. 
  Finding such non-trivial MAWs for a given string is motivated by applications in
  bioinformatics and data compression.
  Fujishige et al. [TCS 2023] proposed
  a data structure of size $\Theta(n)$ that can output the set $\MAW(S)$ of all MAWs
  for a given string $S$ of length $n$
  in $O(n + |\MAW(S)|)$ time,
  based on the \emph{directed acyclic word graph} (\emph{DAWG}).
  In this paper, we present a more space efficient data structure
  based on the \emph{compact DAWG} (\emph{CDAWG}),
  which can output $\MAW(S)$ in $O(|\MAW(S)|)$ time with $O(\emin)$ space,
  where $\emin$ denotes the minimum of the sizes of the CDAWGs for
  $S$ and for its reversal $\rev{S}$.  
  For any strings of length $n$, it holds that $\emin < 2n$,
  and for highly repetitive strings $\emin$ can be sublinear (up to logarithmic) in $n$. 
  We also show that MAWs and their generalization \emph{minimal rare words} have close relationships with \emph{extended bispecial factors}, via the CDAWG.
\end{abstract}

\section{Introduction}

A string $w$ is said to be a \emph{minimal absent word} (\emph{MAW}) for a string $S$ if $w$ does not occur in $S$ and any proper substring of $w$ occurs in $S$.
Throughout this paper, we focus on MAWs which are of form $aub$,
where $a,b$ are characters and $u$ is a (possibly empty) string.
Finding such MAWs for a given string is motivated by applications including
bioinformatics~\cite{Almirantis2017MolecularBiology,Charalampopoulos18,pratas2020persistent,koulouras2021significant}
and data compression~\cite{Crochemore2000DCA,crochemore2002improved,AyadBFHP21}.

Fujishige et al.~\cite{FujishigeTIBT23} proposed
a data structure of size $\Theta(n)$ that can output the set $\MAW(S)$ of all MAWs for a given string $S$ of length $n$ in $O(n + |\MAW(S)|)$ time,
based on the \emph{directed acyclic word graph} (\emph{DAWG})~\cite{Blumer1985}.
The DAWG for string $S$, denoted $\DAWG(S)$, is the smallest DFA recognizing  all suffixes of $S$.

In this paper, we present a more space efficient data structure
based on the \emph{compact DAWG} (\emph{CDAWG})~\cite{Blumer1987},
which takes $O(\emin)$ space and can output $\MAW(S)$ in $O(|\MAW(S)|)$ time,
where $\emin = \min\{\lefte(S), \righte(S)\}$,
with $\lefte(S)$ and $\righte(S)$ being the numbers of edges of the CDAWGs for
$S$ and for its reversal $\rev{S}$, respectively.
For any string $S$ of length $n$ it holds that
the number of edges of the CDAWG for $S$ is less than $2n$
and it can be sublinear (up to logarithmic) in $n$
for highly repetitive strings~\cite{Rytter06,RadoszewskiR12,BelazzouguiC17}.

Our new data structure with the CDAWG is built on our deeper analysis on how
the additive $O(n)$ term is required in the output time of Fujishige et al.'s DAWG-based algorithm~\cite{FujishigeTIBT23}.
Let $x$ be a node of $\DAWG(S)$ and let $u$ be the longest string represented by $x$.
Suppose that there is a node $y$ of which the suffix link points to $x$,
and let $a$ be the character such that $au$ is the shortest string represented by $y$.
It follows that $aub$ is a MAW for $S$ if and only if
node $x$ has an out-edge labeled $b$ but node $y$ does not.
Fujishige et al.'s algorithm finds such characters $b$ by
comparing the out-edges of $y$ and $x$ in sorted order.
However, Fujishige et al.'s algorithm performs redundant comparisons
even for every pair of \emph{unary nodes} $x, y$ connected by a suffix link.
This observation has led us to the use of the CDAWG for $S$,
where unary nodes are compacted.
Still, we need to manage the case where $y$ is unary but $x$ is not unary.
We settle this issue by a non-trivial use of the \emph{extended longest path tree} of the CDAWG~\cite{Inenaga_LCDAWG_2024}.

In the previous work by Belazzougui and Cunial~\cite{BelazzouguiC17},
they proposed another CDAWG-based data structure
of size $O(\max\{\lefte(S), \righte(S)\})$ that can output all MAWs for $S$ in $O(\max\{\lefte(S), \righte(S)\}+|\MAW(S)|)$ time.
The $\lefte(S)$ term comes from the \emph{Weiner links} for all nodes of the CDAWG,
which can be as large as $\Omega(\righte(S)\sqrt{n})$ for some strings~\cite{Inenaga_LCDAWG_2024}.
Note that our data structures uses only $O(\emin) = O(\min\{\lefte(S), \righte(S)\})$ space and takes only $O(|\MAW(S)|)$ time to report the output.
We also present a simpler algorithm that restores
the string $u$ in linear time in its length using the CDAWG-grammar~\cite{BelazzouguiC17},
without using level-ancestor data structures~\cite{BerkmanV94,BenderF04},
which can be of independent interest.

We also show how our CDAWG-based data structure of $O(\emin)$ space can output
all \emph{extended bispecial factors} (\emph{EBFs}) \cite{AlmirantisCGIMP19} for the input string $S$
in output optimal time.
This proposed method can be seen as a more space-efficient variant of
Almirantis et al.'s method~\cite{AlmirantisCGIMP19}
that is built on suffix trees~\cite{Weiner1973}.
We also show that \emph{minimal rare words} (\emph{MRWs})~\cite{BelazzouguiC15},
which are generalizations of MAWs and \emph{minimal unique substrings} \emph{MUSs}~\cite{Ilie2011MUS},
have close relationships with EBFs through the CDAWG.

\section{Preliminaries}

\subsection{Basic Notations}
Let $\Sigma$ be an alphabet.
An element of $\Sigma$ is called a character.
An element of $\Sigma^\ast$ is called a string.
The length of a string $S$ is denoted by $|S|$.
The empty string $\varepsilon$ is the string of length 0.
Let $\Sigma^+ = \Sigma^* \setminus \{\varepsilon\}$.
If $S = xyz$, then $x$, $y$, and $z$ are called
a \emph{prefix}, \emph{substring} (or \emph{factor}), and \emph{suffix} of $S$, respectively.
Let $\Prefix(S)$, $\Substr(S)$, and $\Suffix(S)$ denote the sets of
the prefixes, substrings, and suffixes of $S$, respectively.
An element in $\Suffix(S) \setminus \{S\}$ is called a \emph{proper suffix} of $S$.
For any $1 \le i \le |S|$, $S[i]$ denotes
the $i$-th character of string $S$.
For any $1 \le i \le j \le |S|$, $S[i..j]$ denotes
the substring of $S$ that begins at position $i$ and ends at position $j$.
For convenience, let $T[i..j] = \varepsilon$ for any $i > j$.
We say that string $w$ \emph{occurs} in string $S$
iff $w$ is a substring of $S$.
Let $\occ_S(w) = |\{i \mid w = S[i..i+|w|-1]\}|$ denote
the number of occurrences of $w$ in $S$.
For convenience, let $\occ_S(\varepsilon) = |S|+1$.
For a string $S$, let $\rev{S} = S[|S|] \cdots S[1]$ denote the reversed string of $S$.

For a set $\mathsf{S}$ of strings, let $|\mathsf{S}|$ denote the number of strings in $\mathsf{S}$.

\subsection{MAW, MUS, MRW, and EBF}

A string $w$ is said to be an \emph{absent word} for
another string $S$ if $\occ_S(w) = 0$.
An absent word $w$ for $S$ 
is called a \emph{minimal absent word} (\emph{MAW}) for $S$ if $\occ_S(w[1..|w|-1]) \geq 1$ and $\occ_S(w[2..|w|]) \geq 1$.
By definition, any character $c$ not occurring in $S$ is a MAW for $S$.
In the rest of this paper, we exclude such trivial MAWs and 
we focus on non-trivial MAWs of form $aub$,
where $a,b \in \Sigma$ and $u \in \Sigma^*$.
We denote by $\MAW(S)$ the set of all non-trivial MAWs for $S$.

A string $w$ 
is called \emph{unique} in another string $S$ if $\occ_S(w) = 1$.
A unique substring $w$ for $S$
is called a \emph{minimal unique substring} (\emph{MUS}) for $S$
if $\occ_S(w[1..|w|-1]) \geq 2$ and $\occ_S(w[2..|w|]) \geq 2$.
Let $\MUS(S)$ denote the set of non-trivial MUSs of form $aub$ for string $S$.
Since a unique substring $w$ of $S$ has exactly one occurrence in $S$,
each $w$ element in $\MUS(S)$
can be identified with a unique interval $[i..j]$ such that
$1 \leq i \leq j \leq n$ and $w = S[i..j]$.

A string $w$ is called a \emph{minimal rare word} \emph{MRW} for
another string $S$ if
\begin{itemize}
  \item[(1)] $w = c \in \Sigma$, or
  \item[(2)] $w = aub$ with $a,b \in \Sigma$ and $u \in \Sigma^*$,
    such that $\occ_S(au) > \occ_S(aub)$ and $\occ_S(ub) > \occ_S(aub)$.
\end{itemize}
In particular, an MRW $w$ for $S$ is a MAW for $S$ when $\occ_S(w) = 0$.
Also, an MRW $w$ is a MUS for $S$ when $\occ_S(w) = 1$.
Let $\MRW(S)$ denote the set of non-trivial MRWs of type (2) for string $S$.
Then, $\MAW(S) \subseteq \MRW(S)$ and $\MUS(S) \subseteq \MRW(S)$ hold.

A string $w = aub \in \Substr(S)$ with $a,b \in \Sigma$ and $u \in \Sigma^*$
is called an \emph{extended bispecial factor} (\emph{EBF}) for string $S$
if $\occ_S(a'u) \geq 1$ and $\occ_S(ub') \geq 1$ for some characters $a' \neq a$ and $b' \neq b$.
Let $\EBF(S)$ denote the set of EBFs for string $S$.

\subsection{Maximal Substrings and Maximal Repeats}

A substring $u$ of a string $S$ is said to be \emph{left-maximal} in $S$ if
(1) there are distinct characters $a, b$ such that $au, bu \in \Substr(S)$, or
(2) $u \in \Prefix(S)$.
Symmetrically, a substring $u$ of $S$ is said to be \emph{right-maximal} in $S$ iff
(1) there are distinct characters $a, b$ such that $ua, ub \in \Substr(S)$, or
(2) $u \in \Suffix(S)$.
A substring $u$ of $S$ is said to be \emph{maximal} in $S$ iff
$u$ is both left-maximal and right-maximal in $S$.
A maximal substring $u$ is said to be a \emph{maximal repeat}
if $\occ_S(u) \geq 2$.
Let $\MR(S)$ denote the set of maximal repeats in $S$.
For a maximal repeat $u$ of a string $S$,
the characters $a \in \Sigma$ such that $\occ_S(au) \geq 1$
are called the \emph{left-extensions} of $u$,
and the characters $b \in \Sigma$ such that $\occ_S(ub) \geq 1$ are called the
\emph{right-extensions} of $u$.

The following relationship between MAWs and maximal repeats is known:
\begin{lemma}[\cite{BelazzouguiC15} and Theorem 1 of~\cite{PinhoFGR09}] \label{lem:maw_mr}
  For any MRW $aub$ for string $S$ with $a,b \in \Sigma$ and $u \in \Sigma^*$,
  $u$ is a maximal repeat in $S$.
\end{lemma}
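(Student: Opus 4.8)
The plan is to verify the three defining conditions of a maximal repeat for $u$: that $\occ_S(u) \ge 2$, that $u$ is right-maximal, and that $u$ is left-maximal. Throughout I use the type-(2) inequalities $\occ_S(au) > \occ_S(aub)$ and $\occ_S(ub) > \occ_S(aub)$ defining the MRW; since $\occ_S(aub) \ge 0$, these already give $\occ_S(au) \ge 1$ and $\occ_S(ub) \ge 1$, so in particular $u \in \Substr(S)$. The degenerate case $u = \varepsilon$ is immediate: the empty string is both a prefix and a suffix of $S$, hence left- and right-maximal, and $\occ_S(\varepsilon) = |S|+1 \ge 2$. So I may assume $u \in \Sigma^+$ and reason about genuine occurrences.

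First I would establish right-maximality. The key observation is that the occurrences of $aub$ are in bijection with the occurrences of $au$ that are immediately followed by $b$: an occurrence $S[i..i+|au|-1] = au$ extends to an occurrence of $aub$ precisely when $S[i+|au|] = b$. Hence $\occ_S(au) > \occ_S(aub)$ forces at least one occurrence of $au$ that is not followed by $b$. Looking at the character following the $u$-part of that occurrence, this occurrence of $u$ is either (i) followed by some $c \ne b$, or (ii) located at the right end of $S$, i.e.\ $u \in \Suffix(S)$. Combined with the occurrence of $ub$ guaranteed above, case (i) gives $ub, uc \in \Substr(S)$ with $b \ne c$ and case (ii) gives $u \in \Suffix(S)$; either way $u$ is right-maximal.

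A symmetric argument applied to $\occ_S(ub) > \occ_S(aub)$ --- now using that occurrences of $aub$ correspond to occurrences of $ub$ that are immediately preceded by $a$ --- produces an occurrence of $ub$ not preceded by $a$, whose $u$-part is either preceded by some $c' \ne a$ or is a prefix of $S$. Together with the occurrence of $au$, this establishes left-maximality of $u$.

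Finally, $\occ_S(u) \ge 2$ comes for free from the two occurrences already exhibited for right-maximality: the occurrence of $u$ arising from $ub$ is followed by $b$, whereas the occurrence arising from the ``$au$ not followed by $b$'' witness is followed by a character $c \ne b$ (case (i)) or ends at position $|S|$ (case (ii)). In both situations the two occurrences of $u$ have different right contexts and hence start at different positions, so $\occ_S(u) \ge 2$. I expect the only point requiring care to be this distinctness bookkeeping --- one must confirm that the occurrence witnessing ``followed by $b$'' and the one witnessing ``not followed by $b$'' can never coincide, which is exactly what the mismatch in the following character (or the suffix/non-suffix distinction) guarantees. Everything else reduces to the elementary correspondence between occurrences of a string and occurrences of its one-character extensions.
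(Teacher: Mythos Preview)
Your argument is correct: the two strict inequalities in the type-(2) definition of an MRW give, for each side of $u$, an occurrence with a right/left context different from $b$/$a$ (or located at the string boundary), and the context mismatch also certifies two distinct occurrences of $u$; the $u=\varepsilon$ case is handled cleanly.

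As for comparison with the paper: there is nothing to compare. The paper does not prove Lemma~\ref{lem:maw_mr}; it merely cites it from~\cite{BelazzouguiC15} and Theorem~1 of~\cite{PinhoFGR09}. Your write-up is a self-contained elementary proof of the cited fact.
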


\subsection{CDAWG}

\begin{figure}
  \centering
  \includegraphics[scale=0.5]{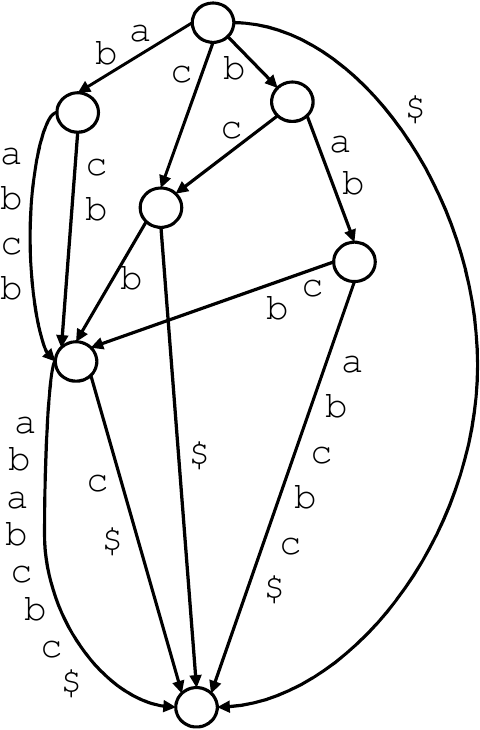}
  \caption{$\CDAWG(S)$ for string $S = \mathtt{ababcbababcbc\$}$.}
  \label{fig:CDAWG}
\end{figure}

The \emph{compact directed acyclic word graph} (\emph{CDAWG}) of string $S$,
that is denoted $\CDAWG(S) = (\VCDAWG, \ECDAWG)$,
is a path-compressed smallest partial DFA that represents $\Suffix(S)$,
such that there is a one-to-one correspondence between
the nodes of $\CDAWG(S)$ and the maximal substrings in $S$.
Intuitively, $\CDAWG(S)$ can be obtained by merging isomorphic subtrees of
the suffix tree~\cite{Weiner1973} for $S$.
Therefore, each string represented by a node $u$ of $\CDAWG(S)$ is a suffix of all the other longer strings represented by the same node $u$.
For convenience, for any CDAWG node $u \in \VCDAWG$,
let $\str(u)$ denote the \emph{longest} string represented by the node $u$.
The \emph{suffix link} of a CDAWG node $u$ points to another CDAWG node $v$
iff $\str(v)$ is the longest proper suffix of $\str(u)$ that is not represented by $u$.
We denote it by $\slink(u) = v$.
See Fig.~\ref{fig:CDAWG} for an example of $\CDAWG(S)$.

We remark that every internal node in $\CDAWG(S)$
corresponds to a maximal repeat in $S$,
that is, $\MR(S) = \{\str(v) \mid \mbox{$v$ is an internal node of $\CDAWG(S)$}\}$.
The first characters of the labels of the out-going edges of an internal node
of $\CDAWG(S)$ are the right-extensions of the corresponding maximal substring
for that node.
Hence the number of edges in $\CDAWG(S)$
is equal to the total number of right-extensions of maximal repeats in $\MR(S)$,
which is denoted by $\righte(S)$.
Similarly, we denote by $\lefte(S)$ the number of left-extensions of
the maximal repeats in $\MR(S)$.
Blumer et al.~\cite{Blumer1987} showed that $\min\{\righte(S),\lefte(S)\} < 2n$ for any string $S$ of length $n$.
On the other hand, it is known that there is an $\Omega(\sqrt{n})$ gap between $\righte(S)$ and $\lefte(S)$:

\begin{lemma}[Lemma 9 of~\cite{Inenaga_LCDAWG_2024}] 
  There exists a family of strings $S$ of length $n$
  such that $\righte(S) = \Theta(\sqrt{n})$ and $\lefte(S) = \Theta(n)$.
\end{lemma}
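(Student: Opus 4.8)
The plan is to exhibit an explicit family and compute both quantities directly from the definition of maximal repeats. Fix a parameter $k$, let $a$ be a letter, and let $\#_1,\dots,\#_k$ be pairwise distinct separators, all different from $a$. Define
\[
  S \;=\; \#_1 a\,\#_2 a^2 \cdots \#_k a^k \;=\; \prod_{i=1}^{k}\#_i a^i,
\]
so that $n = |S| = k + \sum_{i=1}^{k} i = \Theta(k^2)$, hence $k = \Theta(\sqrt{n})$. I would first establish that $\MR(S) = \{a^m : 1 \le m \le k-1\}$. Since each separator $\#_i$ is unique, any substring that contains some $\#_i$ occurs exactly once in $S$ and is therefore not a repeat; consequently every maximal repeat is a power of $a$. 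Conversely, each $a^m$ with $1 \le m \le k-1$ occurs inside every block $\#_i a^i$ with $i \ge m$ and is both left- and right-maximal, while $a^k$ occurs only once and so is excluded.

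Next I would count the extensions of each $a^m$. On the right, $a^m$ is followed by $a$ whenever it sits inside a longer run (any block $i > m$, e.g.\ block $k$), and by $\#_{m+1}$ exactly once, at the end of block $m$; thus $a^m$ has precisely the two right-extensions $a$ and $\#_{m+1}$. Summing over $m$ gives $\righte(S) = 2(k-1) = \Theta(\sqrt{n})$. On the left, $a^m$ is preceded by $a$ (inside any longer run) and by each $\#_i$ with $i \ge m$ (at the leftmost occurrence of $a^m$ in block $i$), so it has $k - m + 2$ left-extensions. Therefore
\[
  \lefte(S) \;=\; \sum_{m=1}^{k-1}(k - m + 2) \;=\; \frac{k(k-1)}{2} + 2(k-1) \;=\; \Theta(k^2) \;=\; \Theta(n).
\]

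As a consistency check I would note that $\lefte(S) = \righte(\rev{S})$, where $\rev{S} = a^k \#_k \cdots a^2 \#_2\, a\, \#_1$ is again a sequence of runs of $a$ separated by the distinct $\#_i$; the mirror of the above analysis shows that each $a^m$ now has $\Theta(k-m)$ right-extensions in $\rev{S}$, recovering $\righte(\rev{S}) = \Theta(n)$ independently.

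The extension bookkeeping is routine; the step that needs care is the identification $\MR(S) = \{a^m : 1 \le m \le k-1\}$, and I expect it to be the main obstacle because it is precisely the source of the asymmetry. One must verify that the uniqueness of the separators really forces every repeated substring to be a power of $a$ (so that no string straddling a separator contributes extra right-extensions and inflates $\righte(S)$), and that every short power $a^m$ is genuinely left-maximal and right-maximal, so that it is a CDAWG node contributing its full extension sets. The point is that the separators feed many distinct left contexts $\#_m,\dots,\#_k$ into each short power $a^m$ while adding only the single right-extension $\#_{m+1}$, which is exactly what makes $\lefte(S)$ quadratic while $\righte(S)$ stays linear in $k$.
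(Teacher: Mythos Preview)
Your construction does not achieve the claimed asymmetry because the count of right-extensions of $a^m$ is wrong. You assert that $a^m$ is followed only by $a$ (inside a longer run) and by $\#_{m+1}$ (at the end of block $m$), but you have overlooked that $a^m$ also occurs as the \emph{suffix} of every longer run $a^i$ with $i>m$: the last $m$ letters of $a^i$ form an occurrence of $a^m$, and for $m\le i<k$ this occurrence is immediately followed by $\#_{i+1}$. Hence the right-extensions of $a^m$ are $\{a,\#_{m+1},\#_{m+2},\dots,\#_k\}$, of size $k-m+1$, not $2$. (Try $k=3$: in $\#_1 a\,\#_2 aa\,\#_3 aaa$ the single letter $a$ is followed by $\#_2$, by $\#_3$, and by $a$.) Summing,
\[
\righte(S)\;\ge\;\sum_{m=1}^{k-1}(k-m+1)\;=\;\Theta(k^2)\;=\;\Theta(n),
\]
so $\righte(S)$ and $\lefte(S)$ are both $\Theta(n)$ and no gap appears. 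The picture is in fact almost perfectly symmetric: every $\#_i$ with $i\ge m$ sits immediately to the left of some occurrence of $a^m$, and every $\#_{i+1}$ with $i\ge m$ sits immediately to its right. Pairwise distinct separators flanking the runs on both sides cannot produce a left/right imbalance.

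To obtain $\righte(S)=\Theta(\sqrt{n})$ while $\lefte(S)=\Theta(n)$ you need the maximal repeats to see only $O(1)$ distinct characters on the right while seeing many on the left, and you must simultaneously ensure that no \emph{new} maximal repeats created by the construction (for instance strings of the form $a^m b$ once a common right delimiter $b$ is introduced) accumulate $\Theta(k)$ right-extensions of their own and push $\righte$ back up to $\Theta(n)$. That bookkeeping is the actual crux, and your family does not address it. The paper itself does not give a proof here but cites Lemma~9 of~\cite{Inenaga_LCDAWG_2024}; you will need a different family, or a substantially more careful variant, to establish the bound.
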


The \emph{length} of an edge $(u, y, v) \in \VCDAWG \times \Sigma^+ \times \VCDAWG$
in $\CDAWG(S)$ is its string label length $|y|$.
An edge $(u, y, v)$ is called a \emph{primary edge}
if $|\str(u)|+|y| = |\str(v)|$,
and $(u, y, v)$ is called a \emph{secondary edge}
otherwise ($|\str(u)|+|y| < |\str(v)|$).
Namely, the primary edges are the edges in the \emph{longest} paths
from the source to all the nodes of $\CDAWG(S)$.
Thus, there are exactly $|\VCDAWG|-1$ primary edges.
Let $\LPT(S)$ denote the spanning tree of $\CDAWG(S)$ that consists of the primary edges,
i.e., $\LPT(S)$ is the tree of the longest paths from the source to all nodes of $\CDAWG(S)$.

Since $\MR(\rev{S}) = \{\rev{w} \mid w \in \MR(S)\}$ holds for any string $S$,
the CDAWGs for $S$ and $\rev{S}$ can share the same nodes.
Let $\rev{\ECDAWG}$ denote the set of edges of $\CDAWG(\rev{S})$.
The primary (resp. secondary) edges in $\rev{\ECDAWG}$ are also called
\emph{hard} (resp. \emph{soft}) \emph{Weiner links} of $\CDAWG(S)$.
Each Weiner link is labeled by the first character of the label
of the corresponding edge in $\rev{\ECDAWG}$.
Note that $|\rev{\ECDAWG}| = \lefte(S)$.
\section{CDAWG Grammars}

A \emph{grammar-compression} is a method that represents an input string $S$
with a context-free grammar $\mathcal{G}$ that only generates $S$.
The size of $\mathcal{G}$ is evaluated by the total length of the right-hand sides of the productions,
which is equal to the size of the minimal DAG obtained by merging isomorphic subtrees of the derivation tree for $\mathcal{G}$.
Hence the more repeats $S$ has, the smaller grammar $S$ tends to have.

Assume that $S$ terminates with a unique character $\$$ not occurring elsewhere in $S$.
Belazzougui and Cunial~\cite{BelazzouguiC17} proposed a grammar-compression based on the CDAWGs.
We use a slightly modified version of their grammar:
The \emph{CDAWG-grammar} for string $S$ of length $n$, denoted $\mathcal{G}_\CDAWG$,
is such that: 
\begin{itemize}
  \item The non-terminals are the nodes in $\CDAWG(S)$;
  \item The root of the derivation tree for $\mathcal{G}_\CDAWG$ is the sink of $\CDAWG(S)$;
  \item The derivation tree path from its root to its leaf representing the $i$-th character $S[i]$ is equal to
    the reversed path from the sink to the source of $\CDAWG(S)$ that spells out the $i$-th (text positioned) suffix $S[i..n]$.
\end{itemize}
By definition, the DAG for the CDAWG-grammar $\mathcal{G}_\CDAWG$
is isomorphic to the reversed DAG for $\CDAWG(S)$, and hence,
the size for $\mathcal{G}_\CDAWG$ is $\righte(S)$.
See Fig.~\ref{fig:CDAWG_and_grammar} for an example.

Belazzougui and Cunial~\cite{BelazzouguiC17} showed how to extract the string label $x$ of a given CDAWG edge in $O(|x|)$ time
in a sequential manner (i.e. from $x[1]$ to $x[|x|]$),
by augmenting the CDAWG-grammar with a level-ancestor data structure~\cite{BerkmanV94,BenderF04} that works on the word RAM.
This allows one to represent $\CDAWG(S)$ in $O(\righte(S))$ space and perform pattern matching queries in optimal time,
without explicitly storing string $S$. They also used this technique to output a MAW $aub$ as a string: Since $u$ is a maximal repeat,
it suffices to decompress the path labels in $\LPT(S)$ from the source to the node $v$ that corresponds to $u$ (i.e. $\str(v) = u$).

We show that, when our goal is simply to obtain the longest string $\str(v)$ for a given CDAWG node $v$, then level-ancestor data structures are not needed.

\begin{figure}[t]
  \centering
  \includegraphics[scale=0.5]{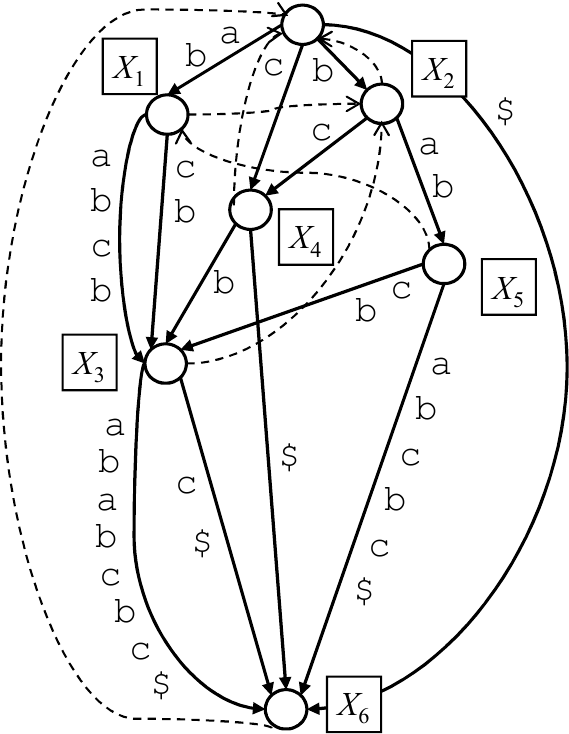}
  \hfill
  \includegraphics[scale=0.5]{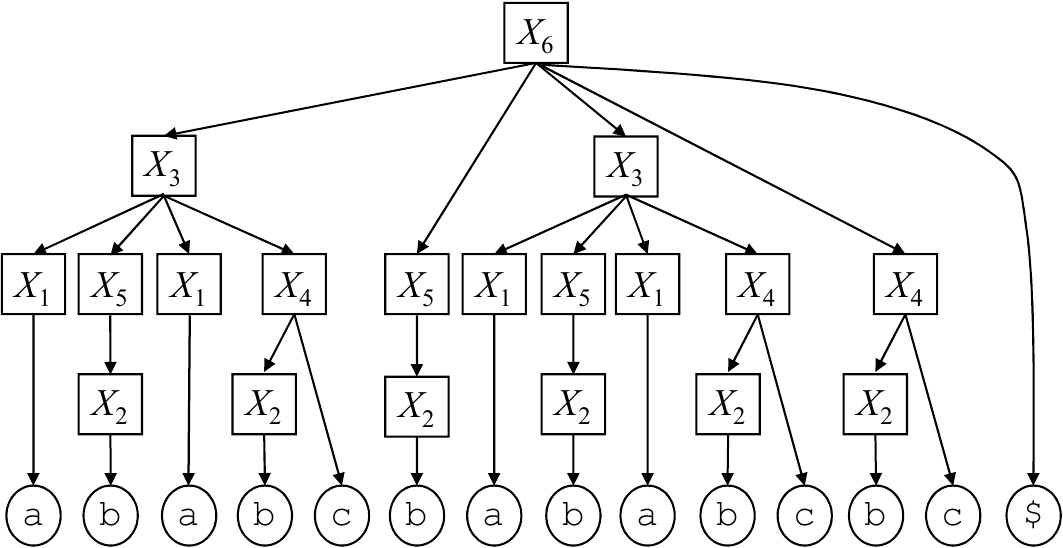}
  \caption{
    $\CDAWG(S)$ (left) and the CDAWG-grammar $\mathcal{G}_{\CDAWG}$ (right) for the running example from Fig.~\ref{fig:CDAWG}.
    The dashed arcs represent the suffix links of the nodes of $\CDAWG(S)$.
    To obtain $\str(X_3) = \mathtt{ababcb}$ for the CDAWG node $X_3$, we first decompress the non-terminal $X_3$ and obtain $\mathtt{ababc}$.
    We move to the node $X_2$ by following the suffix link of $X_3$. We then decompress the non-terminal $X_2$ and obtain the remaining $\mathtt{b}$.
  } \label{fig:CDAWG_and_grammar}
\end{figure}

\begin{lemma} \label{lem:decompression_suffix_link}
  Given $\CDAWG(S) = (\VCDAWG, \ECDAWG)$ with suffix links and grammar $\mathcal{G}_\CDAWG$ of total size $O(\righte(T))$,
  one can decompress the longest string $\str(v)$ represented by a given node $v$ in $O(|\str(v)|)$ time.
\end{lemma}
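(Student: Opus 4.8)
The plan is to reconstruct $\str(v)$ by walking down the suffix-link chain of $v$ and fully decompressing exactly one grammar non-terminal per step, so that neither edge-label extraction nor level-ancestor queries are ever invoked. Concretely, I set $v_0 = v$ and $v_{i+1} = \slink(v_i)$, and I let $\gamma(v_i)$ denote the string obtained by fully expanding the non-terminal $X_{v_i}$ in $\mathcal{G}_\CDAWG$. The algorithm outputs $\gamma(v_0)\gamma(v_1)\cdots$, performing one suffix-link hop and one grammar decompression per iteration and halting upon reaching the source (where $\str = \varepsilon$). The only auxiliary information used is the suffix links already assumed in the statement, so the working space stays $O(\righte(S))$.

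The correctness rests on the identity $\str(v) = \gamma(v)\cdot\str(\slink(v))$, which I will call $(\star)$; equivalently, $\gamma(v)$ is exactly the prefix of $\str(v)$ of length $|\str(v)| - |\str(\slink(v))|$ obtained by deleting the suffix $\str(\slink(v))$. This is precisely what Fig.~\ref{fig:CDAWG_and_grammar} illustrates, where $\gamma(X_3)=\mathtt{ababc}$, $\str(\slink(X_3))=\str(X_2)=\mathtt{b}$, and $\str(X_3)=\mathtt{ababcb}$. I expect proving $(\star)$ to be the main obstacle, since it is the one place where the definition of the CDAWG-grammar is genuinely needed: it asserts that the factor generated by $X_v$ is exactly the portion of $\str(v)$ ``newly introduced'' at $v$ relative to its suffix-link target. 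I would establish it from the derivation-tree characterization of $\mathcal{G}_\CDAWG$: every occurrence of $X_v$ in the derivation tree generates the same factor $\gamma(v)$, and comparing a root-to-leaf path (which spells a suffix) against the suffix-link structure shows that at node $v$ the spelled suffix factors into the contribution of $X_v$ and the part inherited along the suffix link, the latter being $\str(\slink(v))$ because $\str(\slink(v))$ is by definition the longest proper suffix of $\str(v)$ represented by a different node. The only facts about suffix links I need beyond this are that $\str(\slink(v))$ is a proper suffix of $\str(v)$ and that $|\str(\slink(v))| < |\str(v)|$, both immediate from the definition of $\slink$.

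Granting $(\star)$, correctness follows by telescoping. Because $\str(\slink(v_i))$ is a suffix of $\str(v_i)$, we have $\str(v_i) = \gamma(v_i)\cdot\str(v_{i+1})$, and unfolding gives $\str(v) = \gamma(v_0)\gamma(v_1)\cdots\gamma(v_{k-1})$, where $v_k$ is the source; the chain terminates since $|\str(v_{i+1})| < |\str(v_i)|$ decreases strictly. Hence appending the fully decompressed pieces $\gamma(v_i)$ in order yields exactly $\str(v)$, and no length bookkeeping beyond the test ``is $v_i$ the source?'' is required.

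For the running time, each suffix-link hop strictly shortens the represented string, so every $\gamma(v_i)$ is nonempty and the number of iterations is at most $|\str(v)|$, accounting for $O(|\str(v)|)$ pointer-following work. The decompressed pieces are disjoint factors whose lengths telescope to $\sum_i |\gamma(v_i)| = |\str(v)|$. It then remains to argue that fully expanding a single non-terminal $X_{v_i}$ costs time linear in $|\gamma(v_i)|$: since CDAWG edge labels are non-empty, every production of $\mathcal{G}_\CDAWG$ emits at least one terminal, so the grammar is a proper straight-line program with no unit or $\varepsilon$-productions, and a plain depth-first traversal of the derivation subtree of $X_{v_i}$ visits $O(|\gamma(v_i)|)$ nodes. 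Summing over the chain gives total time $O(|\str(v)|)$, using only the grammar and the suffix links and invoking no level-ancestor data structure.
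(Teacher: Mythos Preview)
Your overall strategy—walk the suffix-link chain $v=v_0,v_1,\ldots$ and fully expand the non-terminal $X_{v_i}$ at each step—is exactly the paper's approach, and your identity $(\star)$ is the same concatenation the paper uses. The gap is in the running-time analysis, specifically the sentence ``since CDAWG edge labels are non-empty, every production of $\mathcal{G}_\CDAWG$ emits at least one terminal, so the grammar is a proper straight-line program with no unit or $\varepsilon$-productions.'' This inference is incorrect: the DAG of $\mathcal{G}_\CDAWG$ is the \emph{reversed} CDAWG, so the arity of the production $X_v\to\cdots$ equals the \emph{in-degree} of $v$ in $\CDAWG(S)$, which has nothing to do with edge-label lengths. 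A CDAWG node can perfectly well have a single incoming edge, in which case the production is unary.

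Concretely, take $S=a^{k}b\$$. Each $a^j$ with $1\le j\le k-1$ is a maximal repeat, the corresponding CDAWG node represents only the single string $a^j$, and its unique incoming edge is from the node for $a^{j-1}$. Hence the grammar contains the unary chain $X_{a^{k-1}}\to X_{a^{k-2}}\to\cdots\to X_{a}\to X_{\varepsilon}$, and a plain DFS expansion of $X_{a^{k-1}}$ visits $\Theta(k)$ nodes while $|\gamma(a^{k-1})|=1$. Since $\slink(a^j)=a^{j-1}$, your algorithm on input $v=a^{k-1}$ performs $k-1$ suffix-link hops and at hop $i$ expands $X_{a^{k-1-i}}$ in $\Theta(k-i)$ time, giving $\Theta(k^2)$ total to output a string of length $k-1$. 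The paper anticipates exactly this issue and fixes it by precomputing, for every maximal unary chain $X_1\to\cdots\to X_\ell\to Y$, a direct pointer from each $X_i$ to $Y$; with these shortcuts each expansion becomes linear in its yield and the $O(|\str(v)|)$ bound follows. Your argument needs this (or an equivalent) patch.
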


\begin{proof}
  For each CDAWG node $v \in \VCDAWG$, let $X(v)$ denote the non-terminal in the grammar $\mathcal{G}_\CDAWG$ that corresponds to $v$.
  We first decompress $X(v)$ and let $p_v$ be the decompressed string.
  Then, the length of $p_v$ is equal to the number of (reversed) paths between the source and $v$ in the CDAWG.
  Thus, if the suffix link $\slink(v)$ points to the source,
  then $p_v$ is the whole string $\str(v)$.
  Otherwise, then let $v_0 = v$.
  We take the suffix link $\slink(v_0) = v_1$,
  obtain string $p_{v_1}$ by decompressing $X(v_1)$.
  This gives us a longer prefix $p_{v_0} p_{v_1}$ of $\str(v)$,
  since in any interval $[i..i+|\str(v)|-1]$ of leaves of the derivation tree for $\mathcal{G}_\CDAWG$
  that represents $\str(v)$, $p_{v_1}$ immediately follows $p_{v_0}$.
  We continue this process until we encounter the last node $v_k$ in the suffix link chain from $v$, of which the suffix link points to the source.
  Then, the concatenated string $p_{v_0} p_{v_1}\cdots p_{v_k}$ is $\str(v)$.

  When there are no unary productions of form $X \rightarrow Y$,
  then the total number of nodes visited for obtaining $p_{v_0} p_{v_1}\cdots p_{v_k} = \str(v)$ is linear in its length.
  Otherwise, then for each maximal sequence of unary productions $X_1 \rightarrow X_2 \rightarrow \cdots \rightarrow X_\ell \rightarrow Y$,
  we create a pointer from $X_i$ to $Y$ for every $1 \leq i \leq \ell$ so that we can extract $\str(v)$
  in time linear in its length\footnote{A similar technique was used in the original CDAWG-grammar by Belazzougui and Cunial~\cite{BelazzouguiC17},
  where they explicitly removed such unary productions.}.
\end{proof}
\section{Computing MAWs and EBFs with CDAWG}

\subsection{Computing MAWs} \label{sec:MAW_CDAWG}

In this section, we first show the following:
\begin{theorem} \label{theo:MAW_CDAWG}
  There exists a data structure of $O(\emin)$ space
  that can output all MAWs for string $S$ in $O(|\MAW(S)|)$ time
  with $O(1)$ working space,
  where $\emin = \min\{\righte(S), \lefte(S)\}$.
\end{theorem}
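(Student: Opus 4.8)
The plan is to reduce the enumeration of $\MAW(S)$ to a traversal of $\CDAWG(S)$ equipped with its suffix links and the extended longest path tree $\LPTplus(S)$, reporting each MAW in a compact constant-size form, and then to symmetrize over $S$ and $\rev{S}$ to obtain the $O(\emin)$ space bound. First I would record the characterization that drives everything. By Lemma~\ref{lem:maw_mr}, if $aub \in \MAW(S)$ then $u$ is a maximal repeat, hence $u = \str(v)$ for an internal node $v$ of $\CDAWG(S)$ (or $u = \varepsilon$ at the source). Conversely, fixing such a $v$ and a left-extension $a$ of $u$, the strings $aub$ that are MAWs are exactly those whose last character lies in $R_u \setminus R_{au}$, where $R_w$ denotes the set of right-extensions of $w$ and $R_{au} \subseteq R_u$ always holds. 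Since $R_u$ is read off the out-edges of $v$, the whole task becomes: for every node $v$ and every left-extension $a$, enumerate $R_u \setminus R_{au}$ in time proportional to its size.

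Next I would split the left-extensions $a$ of each $u = \str(v)$ according to whether $au$ is right-maximal. If $au$ is right-maximal (Case~A), then the $\CDAWG(S)$ node $z$ whose shortest represented string is $au$ satisfies $\slink(z)=v$, and its out-edges spell exactly $R_{au}$; hence the Case~A left-extensions are precisely the incoming suffix links of $v$, and $R_u \setminus R_{au}$ is the difference of the out-edge label sets of $v$ and $z$. If $au$ is not right-maximal (Case~B), then $au$ lies strictly inside a CDAWG edge, $R_{au}$ is a single forced character $b_0$, and every right-extension $b \neq b_0$ of $u$ yields a MAW $aub$; here I would use $\LPTplus(S)$ to locate, for each such $a$, the edge and position carrying $au$ and thus $b_0$. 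Because the incoming suffix links total $O(|\VCDAWG|)=O(\righte(S))$ over all nodes, Case~A stays entirely inside the $O(\righte(S))$-size structure, and because each Case~B left-extension already produces at least $|R_u|-1 \ge 1$ MAWs, its discovery can be charged to its own output.

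The main obstacle is attaining a genuinely output-sensitive $O(|\MAW(S)|)$ time with $O(1)$ working space and \emph{no} additive $O(\emin)$ term. A direct sorted merge of the out-edges of $v$ and $z$ for each incoming suffix link costs $\Theta(\deg(v)+\deg(z))$, whose matched part sums to $\Theta(\righte(S))$ and would reintroduce an additive term even when $|\MAW(S)|$ is tiny; moreover many suffix links may be unproductive (when $R_{au}=R_u$), and even merely visiting them would cost $\Theta(\righte(S))$. I would therefore precompute, per node, a threaded list of only the \emph{productive} incoming suffix links (those with $R_{au}\subsetneq R_u$) so that unproductive children are bypassed, each visited node then being charged to its $\ge 1$ output MAWs. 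The remaining crux — and the non-trivial use of $\LPTplus(S)$ — is to thread directly through the dropped right-extensions $R_u\setminus R_{au}$ without ever scanning the $|R_{au}|$ matched ones, so that the set difference is produced online in $O(|R_u\setminus R_{au}|)$ time from the $O(\righte(S))$-size augmented CDAWG; the same structure supplies the Case~B pairs $(a,b_0)$. The whole enumeration is then a pointer-following traversal using only $O(1)$ registers, emitting each MAW in the compact form $(a,v,b)$ in constant time (Lemma~\ref{lem:decompression_suffix_link} recovers $u=\str(v)$ explicitly on demand).

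Finally I would obtain the $O(\emin)$ bound by symmetry. Since $\MR(\rev{S}) = \{\rev{w} : w \in \MR(S)\}$ we have $\lefte(S)=\righte(\rev{S})$, and $\MAW(\rev{S}) = \{\rev{w} : w \in \MAW(S)\}$ with $|\MAW(\rev{S})| = |\MAW(S)|$, while the CDAWGs for $S$ and $\rev{S}$ share their node set. Hence I would build the above structure on whichever of $\CDAWG(S)$, $\CDAWG(\rev{S})$ has the fewer edges, namely $\min\{\righte(S),\lefte(S)\}=\emin$, run the enumeration there, and report each output triple against the shared nodes, reversing the roles of the two flanking characters when the reversed string was used. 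This gives a data structure of $O(\emin)$ space that outputs $\MAW(S)$ in $O(|\MAW(S)|)$ time with $O(1)$ working space.
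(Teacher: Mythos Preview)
Your framework is essentially the paper's: characterize each MAW $aub$ via a maximal repeat $u=\str(v)$, split according to whether $au$ lands on a CDAWG node (your Case~A, the paper's hard Weiner link case $i=1$) or strictly inside an edge (your Case~B, the paper's virtual soft Weiner link case $i\ge 2$), exploit $\LPTplus(S)$, and finish by symmetry over $S$ and $\rev{S}$. The paper organizes the traversal edge-centrically: for each edge $(\hat v,\hat u)$ of $\LPTplus(S)$ it walks the suffix-link image path $\langle v,u\rangle$, and the internal nodes $u_2,\dots,u_m$ on that path are exactly where the Case~B left-extensions are ``discovered'' without ever storing soft Weiner links. Your node-centric phrasing (``for each node $v$ and every left-extension $a$'') leaves that discovery mechanism implicit; you cannot iterate over the Case~B left-extensions of a fixed $v$ directly without the soft Weiner links you are trying to avoid, so in effect you must adopt the edge-walk anyway.

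The genuine gap is your removal of the additive $O(\righte(S))$ term. You correctly diagnose the obstacle: the sorted merge of $\children(\hat u)$ and $\children(u_1)$ costs $\Theta(|\children(\hat u)|+|\children(u_1)|)$, and the matched portion sums to $\Theta(\righte(S))$ over all edges. Your remedy is to ``thread directly through the dropped right-extensions $R_u\setminus R_{au}$ without ever scanning the $|R_{au}|$ matched ones'' from an $O(\righte(S))$-size structure, but you never say what that augmentation is or how one returns $A\setminus B$ (with $B\subseteq A$, both given as sorted lists) in $O(|A\setminus B|)$ time. With sorted lists alone this is impossible, and explicitly materializing all these differences costs $\Theta(|\MAW(S)|)$ space, which can exceed $\righte(S)$. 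The paper does not attempt any such online mechanism. Instead it uses a much simpler case split: the basic procedure already runs in $O(\righte(S)+|\MAW(S)|)$ time, so when $|\MAW(S)|>\righte(S)$ this is already $O(|\MAW(S)|)$; and when $|\MAW(S)|\le\righte(S)$, one precomputes and stores $\MAW(S)$ in compact form (which now fits in the $O(\righte(S))$ budget, via Lemma~\ref{lem:decompression_suffix_link}) and outputs it directly. That elementary dichotomy is the missing idea in your proposal.
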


For ease of description of our algorithm,
let us assume that
our input string $S$ begins and ends with unique terminal symbols
$\sharp$ and $\$$ not occurring inside $S$, respectively.
We also assume $\sharp, \$ \in \Sigma$ so that
they can be the first or the last characters for MAWs and EBFs
for $S$, respectively.
The case without $\sharp$ and $\$$ can be treated similarly.

We remark that MAWs have symmetric structures such that
$aub \in \MAW(S)$ iff $b\rev{u}a \in \MAW(\rev{S})$.
This implies that one can work with the smaller one of the two:
$\CDAWG(S)$ of $O(\righte(S))$ space
and $\CDAWG(\rev{S})$ of $O(\lefte(S))$ space.
In what follows, we assume w.l.o.g. that $\emin = \righte(S)$,
unless otherwise stated.

On top of the edge-sorted $\CDAWG(S)$ augmented with suffix links
and the CDAWG-grammar $\mathcal{G}_{\CDAWG}$ for $S$,
we use the extended version of $\LPT(S)$ proposed in~\cite{Inenaga_LCDAWG_2024},
that is denoted $\LPTplus(S)$.
Let $v$ be any node in the longest path tree $\LPT(S)$, and suppose that 
its corresponding node in $\CDAWG(S)$ has a secondary out-going edge $(v, u)$.
Then, we create a copy $\ell$ of $u$ as a leaf,
and add a new edge $(v, \ell)$ to the tree.
$\LPTplus(S)$ is the tree obtained by adding all secondary edges of $\CDAWG(S)$ to $\LPT(S)$ in this way, and thus, $\LPTplus(S)$ has exactly $\righte(S)$ edges.
For any node $v$ in $\LPTplus(S)$, let $\str(v)$ denote
the string label in the path from the root to $v$.

Let $u$ be any node in $\CDAWG(S)$
and suppose it has $k$ in-coming edges.
The above process of adding secondary edges to $\LPT(S)$
can be regarded as splitting $u$ into $k+1$ nodes $u', \ell_1, \ldots, \ell_k$,
such that $|\str(u')| > |\str(\ell_1)| > \cdots > |\str(\ell_k)|$.
We define the suffix links of these nodes such that
$\slink(u') = \ell_1$, $\slink(\ell_{i}) = \ell_{i+1}$ for $1 \leq i < k$,
and $\slink(\ell_k) = v'$ such that $v'$ is the node in $\LPTplus(S)$
that corresponds to the destination node $v = \slink(u)$ of the suffix link of $u$ in $\CDAWG(S)$.
We call these copied nodes $\ell_1, \ldots, \ell_k$ as \emph{gray nodes},
and the other node $u'$ as a \emph{white node}.
For convenience, we assume that the source is a white node.
See Fig.~\ref{fig:LPT-plus_MAW} for illustration.

\begin{figure}[t]
  \centering
  \includegraphics[scale=0.5]{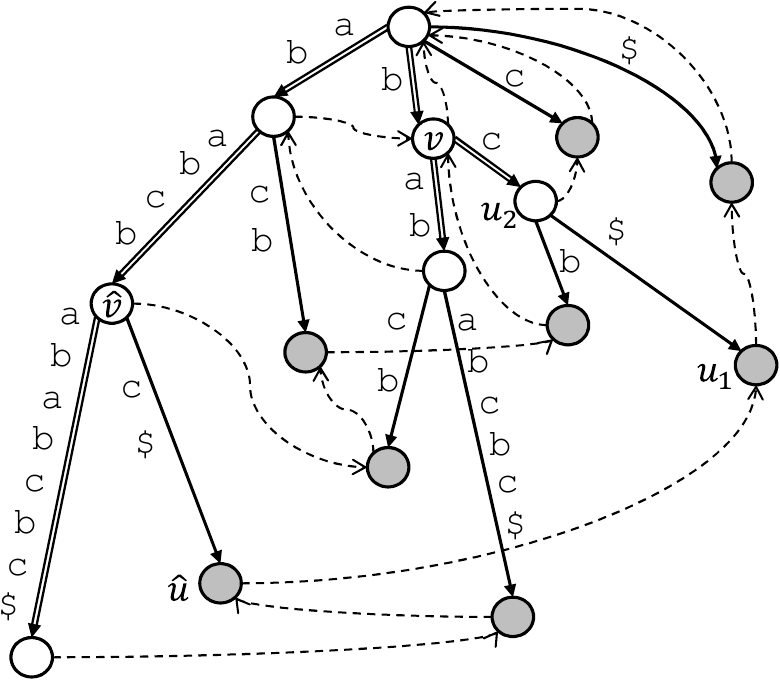}
  \caption{
    $\LPTplus(S)$ for the running example from Fig.~\ref{fig:CDAWG} and~\ref{fig:CDAWG_and_grammar}.
    The double-lined arcs represent primary edges, and the single-lined arcs represent secondary edges.
    For edge $(\hat{v}, \hat{u})$ with string label $\mathtt{c\$}$,
    consider the path $\langle v, u_1 \rangle$ that spells out $\mathtt{c\$}$ and is obtained by the fast link.
    The CDAWG node that corresponds to $u_2 = \mathtt{bc}$ has a virtual soft Weiner link with label $\mathtt{c}$
    pointing to the CDAWG node that corresponds to $\hat{u}$.
    Node $u_2$ has an out-edge with $\mathtt{b}$.
    Therefore, $\mathtt{c}u_2\mathtt{b} = \mathtt{cbcb}$ is a MAW for the string $S = \mathtt{ababcbababcbc\$}$.
  } \label{fig:LPT-plus_MAW}
\end{figure}

In what follows, we show how to compute MAWs for 
a given edge $(\hat{v}, \hat{u})$ in $\LPTplus(S)$.
Note that $\hat{v}$ is always a white node
since every gray node has no children.
Let $X$ be the string label of $(\hat{v}, \hat{u})$.
Let $v$ be the \emph{first} white node in the chain of suffix links from $\hat{v}$ and let $u$ be the descendant of $v$ in $\LPTplus(S)$
such that the path from $v$ to $u$ spells out $X$.
We maintain a \emph{fast link} from the edge $(\hat{v}, \hat{u})$
to the path $\langle v, u \rangle$ so we can access
from $(\hat{v}, \hat{u})$ to the terminal nodes $v$ and $u$
in the path in $O(1)$ time.
We have the four following cases (See also Fig.~\ref{fig:maw_cases}):
\begin{enumerate}
  \item[(A)] $\hat{u}$ and $u$ are both white nodes;
  \item[(B)] $\hat{u}$ is a white node and $u$ is a gray node;
  \item[(C)] $\hat{u}$ is a gray node and $u$ is a white node;
  \item[(D)] $\hat{u}$ and $u$ are both gray nodes.
\end{enumerate}

\begin{figure}[tb]
  \centering
  \begin{minipage}[t]{0.49\hsize}
    \rightline{
      \includegraphics[scale=0.4]{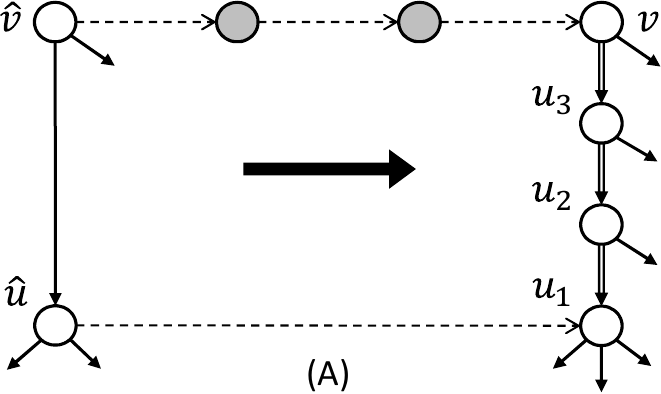}
    }
  \end{minipage}
  \hfill
  \begin{minipage}[t]{0.49\hsize}
    \rightline{
      \includegraphics[scale=0.4]{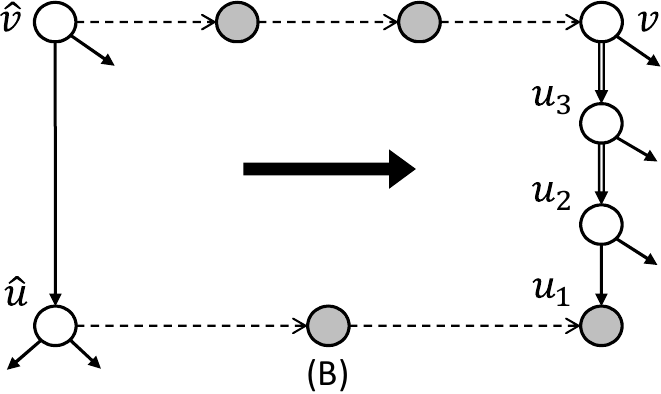}
    }
  \end{minipage} \vspace*{5mm} \\
  \begin{minipage}[t]{0.49\hsize}
    \rightline{
      \includegraphics[scale=0.4]{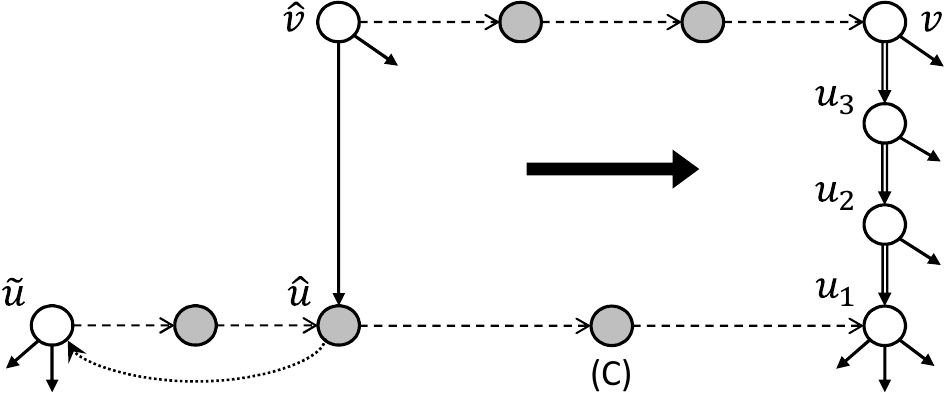}
    }
  \end{minipage}
  \hfill
  \begin{minipage}[t]{0.49\hsize}
    \rightline{
      \includegraphics[scale=0.4]{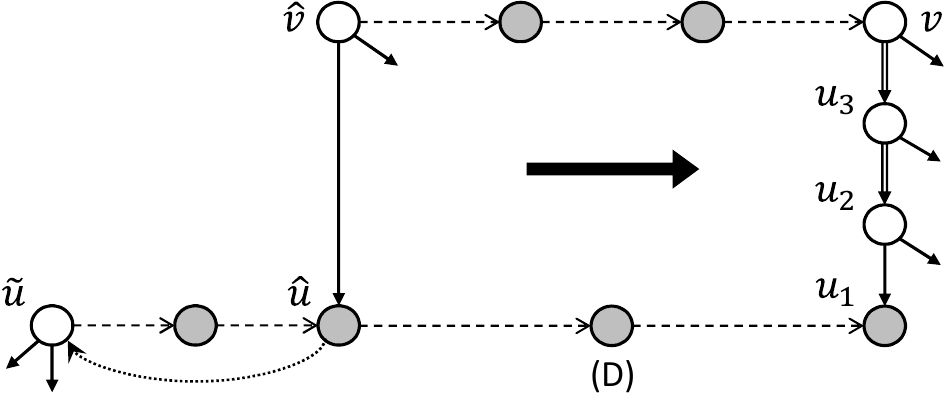}
    }
  \end{minipage}
  \caption{
    Cases (A), (B), (C), and (D) for computing MAWs from a given edge $(\hat{v}, \hat{u})$ on $\LPTplus(S)$.
    The bold arc represents the fast link from edge $(\hat{v}, \hat{u})$ to path $\langle v, u \rangle$, where $u = u_1$.
    The dashed arcs represent suffix links. The dotted arcs in Cases (C) and (D) are additional pointers for $O(1)$-time access from $\hat{u}$ to $\tilde{u}$.
  } \label{fig:maw_cases}
\end{figure}

\noindent{\textbf{How to deal with Case (A):}}
In Case (A), $u = \slink(\hat{u})$.

Let $m$ be the number of nodes in the path from
$v$ to $u$, with $v$ exclusive.
In the sequel, we consider the case where $m > 1$,
as the case with $m = 1$ is simpler to show.
Let $u_1~(= u), u_2, \ldots, u_m$ denote these $m$ nodes in the path
arranged in decreasing order of their depths.
We process each $u_i$ in increasing order of $i$.
For simplicity, we will identify each node $u_i$
as the string $\str(u_i)$ represented by $u_i$.

For $i = 1$,
node $u_1~(= u)$ has a hard Weiner link to $\hat{u}$.
Let $a$ be the character label of this hard Weiner link.
Let $\children(\hat{u})$ and $\children(\slink(\hat{u})) = \children(u_1)$ be the
sorted lists of the first characters of the out-edge labels from $\hat{u}$ and from $u_1$.
Note that $\children(\hat{u}) \subseteq \children(u_1)$.
Then, it follows from the definition of MAWs and Lemma~\ref{lem:maw_mr}
that $au_1b \in \MAW(S)$ iff $b \in \children(u_1) \setminus \children(\hat{u})$.

For each $i = 2, \ldots, m$,
observe that node $u_i$ has a (virtual) \emph{soft} Weiner link with the same
character label $a$, and leading to the same node $\hat{u}$.
We emphasize that this soft Weiner link is only virtual and is \emph{not} stored in our $O(\righte(S))$-space data structure.
We have the following lemma:
\begin{lemma} \label{lem:soft_Weiner_link_MAW}
  For each internal node $u_i$ with $2 \leq i \leq m$ in the path between $v$ and $u$,
  there exists at least one MAW $au_ib$ for $S$, where $a,b \in \Sigma$.
\end{lemma}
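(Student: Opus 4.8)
The plan is to exhibit a concrete MAW of the form $a\str(u_i)b$ and to verify the three defining conditions directly: that $a\str(u_i)$ and $\str(u_i)b$ occur in $S$ while $a\str(u_i)b$ does not. Writing $\children(u_i)$ for the set of right-extensions of $\str(u_i)$, and observing that every right-extension of the left-extended string $a\str(u_i)$ is automatically a right-extension of $\str(u_i)$ (so these extensions form a subset of $\children(u_i)$), the task reduces to finding a single character $b\in\children(u_i)$ that is \emph{not} a right-extension of $a\str(u_i)$. I would therefore compare the number of right-extensions of $\str(u_i)$ with the number of right-extensions of $a\str(u_i)$.

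First I would show $|\children(u_i)|\ge 2$. Since $u_i$ is an internal node of $\CDAWG(S)$, the string $\str(u_i)$ is a maximal repeat and hence right-maximal. As $u_i$ lies strictly between $v$ and $u_1$ on the path spelling $X$, we have $\str(u_i)=\str(v)\,X[1..j]$ for some $j<|X|$; set $c=X[j+1]$, so that $\str(u_i)c$ is a prefix of $\str(u_1)$ and thus $c\in\children(u_i)$. Because $\str(u_i)$ is a proper prefix of $\str(u_1)$ and the terminal $\$$ occurs only as the last symbol of $S$, $\str(u_i)$ does not end with $\$$ and so is not a suffix of $S$. Right-maximality therefore cannot come from the suffix case and must come from $\str(u_i)$ having at least two distinct right-extensions, giving $|\children(u_i)|\ge 2$. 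In parallel, since $a\str(u_1)$ is the shortest string represented by $\hat u$ it occurs in $S$, and as $a\str(u_i)$ and $a\str(u_i)c$ are prefixes of $a\str(u_1)$ they occur in $S$ as well.

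The heart of the argument, and the step I expect to be the main obstacle, is to prove that $a\str(u_i)$ has \emph{exactly one} right-extension, which is then forced to be $c$. This is precisely where the soft Weiner link enters: the link of $u_i$ labeled $a$ being soft (a secondary edge of $\CDAWG(\rev{S})$) and leading to $\hat u$ encodes that the locus of $a\str(u_i)$ is an implicit position on the non-branching path that reads $X[j+1..]$ and reaches the shortest string $a\str(u_1)$ of $\hat u$. Consequently $a\str(u_i)$ is not right-maximal, its unique right-extension is the first character $c$ of that continuation, and hence $a\str(u_i)b\notin\Substr(S)$ for every $b\neq c$. The delicate part I would argue carefully is exactly this equivalence between the softness of the Weiner link and the non-right-maximality of $a\str(u_i)$ — equivalently, that prepending $a$ collapses the two-or-more right-extensions of $\str(u_i)$ down to the single extension $c$, because every occurrence of $a\str(u_i)$ continues to $a\str(u_1)$.

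Finally, I would pick any $b\in\children(u_i)\setminus\{c\}$, which is nonempty since $|\children(u_i)|\ge 2$ and $c\in\children(u_i)$. For this $b$ we have $a\str(u_i)\in\Substr(S)$ and $\str(u_i)b\in\Substr(S)$ but $a\str(u_i)b\notin\Substr(S)$, so $a\str(u_i)b$ is a non-trivial MAW for $S$, which proves the lemma.
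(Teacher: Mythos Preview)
Your proposal is correct and follows essentially the same approach as the paper: both identify the unique right-extension $c$ of $a\str(u_i)$ (coming from the implicit, non-branching locus $\hat u_i$ on the edge $(\hat v,\hat u)$), observe that the branching node $u_i$ has at least two right-extensions, and take any $b\in\children(u_i)\setminus\{c\}$ to obtain the MAW $a\str(u_i)b$. The paper's version is simply terser---it invokes ``$u_i$ is branching but $\hat u_i$ is non-branching'' directly from the CDAWG structure, whereas you re-derive $|\children(u_i)|\ge 2$ via right-maximality and the terminal $\$$; both are fine.
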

\begin{proof}
  Let $c$ be the first character of the label of the edge $(u_{i+1}, u_i)$.
  Let $\hat{u}_i$ denote the (virtual) implicit node
  on the edge $(\hat{v}, \hat{u})$, such that $\slink(\hat{u}_i) = u_i$.
  Note that $c$ is the unique character that immediately follows the locus of $\hat{u}_i$ in the string label of edge $(\hat{v}, \hat{u})$.
  Since $u_i$ is branching but $\hat{u}_i$ is non-branching,
  the set $\children(u_i) \setminus \children(\hat{u}_i) = \children(u_i) \setminus \{c\}$ is not empty.
  Thus, for any character $b \in \children(u_i) \setminus \{c\}$,
  we have $au_i b \in \MAW(S)$.
\end{proof}
Due to Lemma~\ref{lem:soft_Weiner_link_MAW}, we output MAWs
$au_ib$ for every character $b \in \children(u_i) \setminus \{c\}$.

\vspace*{0.5pc}
\noindent{\textbf{How to deal with Case (B):}}
In Case (B), $u$ is a gray node, and there might be other node(s) in the suffix link chain from $\hat{u}$ to $u$.
No MAW is reported for the gray node $u_1$~($= u$). The rest is the same as Case (A).

\vspace*{0.5pc}
\noindent{\textbf{How to deal with Cases (C) and (D):}}
In Cases (C) and (D) where node $\hat{u}$ is a gray node,
we use the white node $\tilde{u}$, which corresponds to the CDAWG node
from which $\hat{u}$ is copied.
We output MAW(s) by comparing out-edges of $\tilde{u}$ and $u_1$.
We store a pointer from $\hat{u}$ to $\tilde{u}$,
so that we can access $\tilde{u}$ from $\hat{u}$ in $O(1)$ time (see also Fig.~\ref{fig:maw_cases}).
The rest is the same as Cases (B) and (A).

\subsubsection{Putting all together.}
By performing the above procedure for all edges $(v, u)$ in $\LPTplus(S)$,
we can output all elements in $\MAW(S)$.

Let us analyze the time complexity and the data structure space usage.
In the case with $i \geq 2$,
by Lemma~\ref{lem:soft_Weiner_link_MAW},
accessing each node $u_i$ can be charged to any one of the MAWs 
reported with $u_i$.
The time required for $u_i$ is thus linear in the number of reported MAWs $au_ib$.
In the case with $i = 1$,
we compare the sorted lists $\children(\hat{u})$ and $\children(u_1)$.
Each successful character comparison with a character $b \in \children(u_1) \setminus \children(\hat{u})$ reports a MAW $au_1b$.
On the other hand, each unsuccessful character comparison
with a character $b \in \children(u_1) \cap \children(\hat{u})$
can be charged to the corresponding out-edge of $\hat{u}$
whose string label begins with $b$.
Thus, we can output all MAWs in $O(\righte(S)+|\MAW(S)|)$ time.
Further, when $|\MAW(S)| \leq \righte(S)$, then we can store
the set $\MAW(S)$ with $O(|\MAW(S)|) \subseteq O(\righte(S))$ space
and output each MAW in $O(1)$ time,
using Lemma~\ref{lem:decompression_suffix_link}.
This leads to our $O(|\MAW(S)|)$-time $O(\righte(S))$-space data structure.

To analyze the working memory usage,
let us consider the number of pointers.
For each given edge $(\hat{v},\hat{u})$,
we climb up the path from $\slink(\hat{u}) = u_1$ to $u_m$ one by one, until reaching $\slink(\hat{v}) = v$, which requires $O(1)$ pointers.
We also use $O(1)$ additional pointers for the character comparisons on the sorted lists of each pair of nodes.
Thus the total working space is $O(1)$.
We have proven Theorem~\ref{theo:MAW_CDAWG}.

\subsubsection{Combinatorics on MAWs with CDAWG.}
Theorem~\ref{theo:MAW_CDAWG} also leads to the following combinatorial property of MAWs:
\begin{theorem} \label{theo:num_MAWs}
  For any string $S$,
  $|\MAW(S)| = O(\sigma \emin)$.
  This bound is tight.
\end{theorem}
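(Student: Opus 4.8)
The plan is to prove the upper bound by a direct injective-counting argument over the CDAWG, and to establish tightness with one explicit family; neither part needs the algorithm of Theorem~\ref{theo:MAW_CDAWG} beyond the structural fact recorded in Lemma~\ref{lem:maw_mr}.

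For the upper bound, I would first use Lemma~\ref{lem:maw_mr}: every MAW $aub$ for $S$ has its middle factor $u$ equal to a maximal repeat in $S$, and since $aub \in \MAW(S)$ forces $au \in \Substr(S)$ and $ub \in \Substr(S)$, the character $a$ is a left-extension and $b$ a right-extension of $u$. Hence the map sending each MAW $aub$ to the triple $(u,a,b)$ is well defined and injective, because the first character, the last character, and the middle factor are uniquely recovered from the string $aub$. It therefore suffices to bound the number of such triples. Since the number of pairs $(u,b)$ consisting of a maximal repeat $u$ and a right-extension $b$ of $u$ equals the number of edges of $\CDAWG(S)$, namely $\righte(S)$, and since for each fixed $(u,b)$ there are at most $\sigma$ possible left characters $a$, I get $|\MAW(S)| \le \sigma\,\righte(S)$. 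Applying the identical argument to $\rev{S}$ and invoking the bijection $aub \in \MAW(S) \iff b\rev{u}a \in \MAW(\rev{S})$ together with $\righte(\rev{S}) = \lefte(S)$ gives $|\MAW(S)| = |\MAW(\rev{S})| \le \sigma\,\lefte(S)$. Combining the two bounds yields $|\MAW(S)| \le \sigma\min\{\righte(S),\lefte(S)\} = \sigma\,\emin$.

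For tightness I would exhibit the family $S_\sigma = a_1 a_2 \cdots a_\sigma$ of length $\sigma$ over an alphabet of size $\sigma$ whose characters are pairwise distinct. Since every nonempty factor of $S_\sigma$ occurs exactly once, the only maximal repeat is $\varepsilon$, so $\CDAWG(S_\sigma)$ consists only of the source (representing $\varepsilon$) and the sink, with exactly $\sigma$ out-edges from the source; hence $\righte(S_\sigma) = \lefte(S_\sigma) = \sigma$ and $\emin = \sigma$. Every MAW of $S_\sigma$ then has $u = \varepsilon$ and is of the form $a_i a_j$, which is a MAW iff both characters occur (always true here) and $a_i a_j \notin \Substr(S_\sigma)$. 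As the length-$2$ factors of $S_\sigma$ are exactly $a_1a_2, a_2a_3, \ldots, a_{\sigma-1}a_\sigma$, there are $\sigma^2 - (\sigma-1) = \Theta(\sigma^2)$ MAWs, so $|\MAW(S_\sigma)| = \Theta(\sigma^2) = \Theta(\sigma\,\emin)$, matching the upper bound.

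The counting in the upper bound is routine; the step deserving the most care is the tightness construction, where I must confirm that the family keeps the CDAWG as small as $\Theta(\sigma)$ edges (no unexpected maximal repeats inflate $\righte$ or $\lefte$) while still forcing $\Theta(\sigma^2)$ genuine MAWs, and where the reversal symmetry must be applied cleanly to certify the $\lefte$ side of $\emin$ as well.
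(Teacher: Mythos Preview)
Your proof is correct. Both halves differ from the paper's in instructive ways.

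For the upper bound, the paper argues through the extended longest-path tree $\LPTplus(S)$ built in Section~\ref{sec:MAW_CDAWG}: it sums, over nodes $u_i$ of $\LPTplus(S)$, the product of at most $\sigma$ incoming Weiner-link labels and the number of out-edges, and then appeals to the reversal symmetry. Your argument bypasses $\LPTplus$ entirely: you go straight from Lemma~\ref{lem:maw_mr} to the injection $aub \mapsto (u,a,b)$ and count pairs $(u,b)$ by $\righte(S)$. This is cleaner and shows that the combinatorial bound does not actually depend on the algorithmic machinery; the paper's route has the advantage of falling out of the algorithm for free, but yours is the more transparent proof of the inequality.

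For tightness, the paper invokes de~Bruijn sequences $B_{k,\sigma}$, using the known fact that $|\MAW(B_{k,\sigma})| = \Omega(\sigma n)$ together with $\emin < 2n$. Your family $S_\sigma = a_1\cdots a_\sigma$ is simpler and fully self-contained. The trade-off is that your construction ties $n = \sigma$, so it exhibits tightness only along the diagonal where the alphabet grows with the string; the de~Bruijn family shows that the bound is tight even for each fixed $\sigma \ge 2$ as $n \to \infty$. Both are legitimate readings of ``tight'' for a bound stated in terms of $\sigma$ and $\emin$, but the paper's example is the stronger statement.
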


\begin{proof}
  For each node $u_i$ of $\LPTplus(S)$,
  the number of distinct characters $a$ such that $au_ib \in \MAW(S)$
  is at most the number of (hard or soft) Weiner links from $u_i$,
  which is bounded by $\sigma$.
  The number of distinct characters $b$
  such that $au_ib \in \MAW(S)$ does not exceed the number of out-edges from $u_i$.
  By summing them up for all nodes $u_i$ of $\LPTplus(S)$,
  we obtain $|\MAW(S)| = O(\sigma \cdot \righte(S))$.
  Since the same argument holds for the reversed string $\rev{S}$
  and $\LPTplus(\rev{S})$,
  we have $|\MAW(S)| = O(\sigma \cdot \min\{\righte(S), \lefte(S)\}) = O(\sigma \emin)$.

  It is known~\cite{MignosiRS02} that for the de Bruijn sequence
  $B_{k,\sigma}$ of order $k$ over an alphabet of size $\sigma$,
  $\MAW(B_{k,\sigma}) = \Omega(\sigma n)$, where $n = |B_{k,\sigma}|$.
  Since $\emin \leq \righte(S) < 2n$ for any string $S$ of length $n$,
  we have $\Omega(\sigma \emin)$ for $B_{k,\sigma}$.  
\end{proof}

\subsection{Computing EBFs}

We first prove the following combinatorial property for EBFs:
\begin{lemma} \label{lem:EBF_property}
  For any string $S$, $|\EBF(S)| \leq \righte(S)+\lefte(S)-|\VCDAWG|+1$.
\end{lemma}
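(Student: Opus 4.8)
The plan is to derive an \emph{exact} formula for $|\EBF(S)|$ and then weaken it to the claimed inequality. For a factor $u$ of $S$, write $\ell(u)$ and $r(u)$ for the numbers of distinct left- and right-extensions of $u$, and $e(u)$ for the number of distinct \emph{bilateral} extensions, i.e.\ substrings $aub \in \Substr(S)$ with $a,b \in \Sigma$. The first observation is that $|\EBF(S)| = \sum_{u} e(u)$, where the sum ranges over the \emph{bispecial} factors, namely those with $\ell(u) \geq 2$ and $r(u) \geq 2$: when $u$ is bispecial every one of its bilateral extensions is an EBF, distinct cores $u$ yield disjoint families of EBFs, and conversely the core of any EBF is bispecial.

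Next I would introduce the bilateral multiplicity $m(u) = e(u) - \ell(u) - r(u) + 1$ and compute $\sum_u m(u)$ over \emph{all} factors $u$. The bijections $(a,u,b)\mapsto aub$, $(a,u)\mapsto au$, and $(u,b)\mapsto ub$ give $\sum_u e(u) = \sum_{k\geq 2} p(k)$ and $\sum_u \ell(u) = \sum_u r(u) = \sum_{k\geq 1} p(k)$, where $p(k)$ denotes the number of distinct length-$k$ factors of $S$ (and $\sum_u 1 = \sum_{k\geq 0} p(k)$). Summing these, all but two terms cancel and $\sum_u m(u) = p(0) - p(1) = 1 - \sigma$, with $\sigma$ the number of distinct characters. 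A short case analysis then shows $m(u) = 0$ for every non-bispecial factor except $u = S$: if $\ell(u)=1$ (resp.\ $r(u)=1$) then every occurrence shares the same left (resp.\ right) neighbour, so $e(u)=r(u)$ (resp.\ $e(u)=\ell(u)$) and $m(u)=0$; and, using the unique end-markers $\sharp,\$$ assumed throughout this section, a factor with $\ell(u)=0$ (a prefix) or $r(u)=0$ (a suffix) occurs exactly once, and hence, unless it equals $S$, its opposite extension number is $1$, again giving $m(u)=0$. The sole exception is $u=S$ with $m(S)=1$, so subtracting it yields $\sum_{u \text{ bispecial}} m(u) = -\sigma$.

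Finally I would connect the sum to the CDAWG. Under the marker assumption, the bispecial factors are exactly the maximal repeats, hence exactly the internal (non-sink) nodes of $\CDAWG(S)$, of which there are $|\VCDAWG| - 1$; moreover for the node $v$ with $\str(v)=u$, $r(u)$ equals its out-degree in $\CDAWG(S)$ and $\ell(u)$ equals its out-degree in $\CDAWG(\rev{S})$, so $\sum_{u \text{ bispecial}}(\ell(u)+r(u)) = \righte(S)+\lefte(S)$. Combining,
\[ |\EBF(S)| = \sum_{u \text{ bispecial}} \bigl(m(u) + \ell(u) + r(u) - 1\bigr) = \righte(S) + \lefte(S) - |\VCDAWG| + 1 - \sigma, \]
and since $\sigma \geq 1$ the claimed bound follows, in fact with room to spare.

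The main obstacle I anticipate is the bookkeeping in the middle step: proving rigorously that $S$ is the \emph{only} non-bispecial factor with nonzero multiplicity, and that the bispecial factors coincide with the internal CDAWG nodes \emph{with matching degrees}, both of which hinge on the unique boundary markers and must be argued carefully (the general markerless case then reduces to this one). The counting identity $\sum_u m(u) = 1 - \sigma$ should also be isolated as a small lemma, though it is classical.
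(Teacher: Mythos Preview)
Your proof is correct and follows a genuinely different route from the paper. The paper argues via the algorithmic machinery of Section~\ref{sec:MAW_CDAWG}: each EBF $au_ib$ is charged either to a CDAWG out-edge of the node $\hat{u}$ reached by the \emph{hard} Weiner link from $u_i$ with label $a$, or---if that Weiner link is \emph{soft}---to the soft Weiner link itself; summing the $\righte(S)$ out-edges and the $\lefte(S)-|\VCDAWG|+1$ soft Weiner links yields the bound. Your argument is instead purely combinatorial: the telescoping identity $\sum_{u} m(u)=1-\sigma$ over all factors, combined with the vanishing of $m(u)$ on every non-bispecial factor other than $S$, produces the \emph{exact} value $|\EBF(S)|=\righte(S)+\lefte(S)-|\VCDAWG|+1-\sigma$ under the marker assumption---in fact $\sigma$ sharper than what the paper states as an equality (the paper's charging silently over-counts the $\sigma$ out-edges of the source, which has no incoming hard Weiner link). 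What your approach buys is a self-contained proof independent of the MAW algorithm and the tight constant; what the paper's charging buys is that it is constructive and directly yields the enumeration procedure behind Theorem~\ref{theo:EBF_CDAWG}. The obstacles you flag are not serious: with markers, $\ell(u)=1$ already forces $u$ not to be a prefix, so every occurrence has a left neighbour and $e(u)=r(u)$ follows; the identification of bispecial factors with internal CDAWG nodes, with $r(u)$ and $\ell(u)$ equal to the respective out-degrees in $\CDAWG(S)$ and $\CDAWG(\rev{S})$, is exactly the content of the paper's preliminaries; and the markerless case is hand-waved in the paper just as you propose.
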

\begin{proof}
  Recall our algorithm for computing $\MAW(S)$ in Section~\ref{sec:MAW_CDAWG}.
  Let $u_i$ be a node that has a (soft or hard) Weiner link with label $a \in \Sigma$ and has an out-edge starting with $b \in \Sigma$.
  Assuming $S[1] = \sharp$ and $S[|S|] = \$$,
  there are characters $a', b' \in \Sigma$
  such that $a' \neq a$, $b' \neq b$, and $a'u_i, u_ib' \in \Substr(S)$.
  Note also that $au_i, u_ib\in \Substr(S)$.
  Thus, if $au_ib \in \Substr(S)$, then $au_i b \in \EBF(S)$.
  If the locus for $au_i$ is the node $\hat{u}$,
  in which case the Weiner link from $u_i~(= u_1)$ to $\hat{u}$ is hard,
  then the EBF $au_ib$ is charged to the out-edge of $\hat{u}$ that starts with $b$.
  If the locus for $au_i$ is on the edge $(\hat{v},\hat{u})$,
  then the EBF $au_ib$ is charged to the soft-Weiner link from $u_i$ that labeled $a$.
  Since the number of hard Weiner links is $|V|-1$,
  the number of soft-Weiner links is $\lefte(S)-|V|+1$.
  Thus we have $|\EBF(S)| = \righte(S)+\lefte(S)-|V|+1$
  for any string $S$ with $S[1] = \sharp$ and $S[|S|] = \$$.

  If $S[1] \neq \sharp$ and/or $S[|S|] \neq \$$,
  then some node(s) $u_i$ in $\CDAWG(S)$ may only have
  unique characters $a,b$ such that $au_i, u_ib \in \Substr(S)$.
  Thus the bound is $|\EBF(S)| \leq \righte(S)+\lefte(S)-|\VCDAWG|+1$ for arbitrary strings $S$.
\end{proof}

By modifying our algorithm for computing MAWs for $S$
as is described in the proof for Lemma~\ref{lem:EBF_property},
we can obtain the following:

\begin{theorem} \label{theo:EBF_CDAWG}
  There exists a data structure of $O(\emin)$ space
  that can output all EBFs for string $S$ in $O(|\EBF(S)|)$ time
  with $O(1)$ working space.
\end{theorem}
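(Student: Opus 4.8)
The plan is to reuse verbatim the $O(\emin)$-space data structure built for Theorem~\ref{theo:MAW_CDAWG} — the edge-sorted $\CDAWG(S)$ with suffix links, the CDAWG-grammar $\mathcal{G}_\CDAWG$, and $\LPTplus(S)$ equipped with fast links — and to traverse the edges $(\hat{v},\hat{u})$ of $\LPTplus(S)$ exactly as in Section~\ref{sec:MAW_CDAWG}, through the same four cases (A)--(D). The only change is that at each node $u_i$ of the path $\langle v,u\rangle$ reached via the fast link I replace the \emph{absence} test producing MAWs by its \emph{presence} counterpart, which is precisely what the proof of Lemma~\ref{lem:EBF_property} prescribes: under the assumption $S[1]=\sharp$ and $S[|S|]=\$$, any node $u_i$ carrying a (hard or soft) Weiner link with label $a$ admits bispecial witnesses $a'\neq a$ and $b'\neq b$, so $au_ib$ is an EBF exactly when $au_ib\in\Substr(S)$.

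Concretely, for the first node $u_1$ I would not compare $\children(\hat{u})$ against $\children(u_1)$; instead I would iterate directly over the out-edges of $\hat{u}$ (of $\tilde{u}$ in the gray Cases (C) and (D)), emitting the EBF $au_1b$ for every first character $b\in\children(\hat{u})$. This is correct because $\hat{u}$ is the locus of $au_1$, so an out-edge of $\hat{u}$ labeled $b$ is a right-extension of $au_1$, i.e.\ $au_1b\in\Substr(S)$, and by the preceding paragraph each such $au_1b$ lies in $\EBF(S)$. For every internal node $u_i$ with $2\le i\le m$ encountered while climbing the suffix-link path from $u_1$ to $u_m$, the locus $\hat{u}_i$ of $au_i$ lies strictly inside the edge $(\hat{v},\hat{u})$ and is therefore followed by a unique character $c$; hence $au_ic$ is the only present right-extension, and I emit the single EBF $au_ic$ for the (virtual) soft Weiner link labeled $a$ from $u_i$ — the exact complement of the MAWs $au_ib$, $b\in\children(u_i)\setminus\{c\}$, produced in Lemma~\ref{lem:soft_Weiner_link_MAW}.

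For correctness and non-duplication I would argue that every EBF $aub$ has a unique maximal-repeat node $u$ and a unique left character $a$, which corresponds to exactly one Weiner link from $u$ (hard or soft, never both); since each such link is processed once and yields its valid right characters once each, every element of $\EBF(S)$ is reported exactly once. This matches the charging of Lemma~\ref{lem:EBF_property}, where hard-link EBFs are charged to distinct out-edges of $\hat{u}$ and soft-link EBFs to distinct soft links. For the running time, the per-edge work is $O(1)$ plus $O(1)$ for each EBF emitted, so a single pass costs $O(\righte(S)+|\EBF(S)|)$. To obtain the claimed $O(|\EBF(S)|)$ output time I would use the same dichotomy as for MAWs: when $|\EBF(S)|\le\righte(S)$ I precompute the set in $O(\righte(S))$ time and store each EBF as its node $u_i$ together with the two characters $a,b$ in $O(|\EBF(S)|)\subseteq O(\emin)$ space, spelling out any requested EBF by decompressing $u_i$ via Lemma~\ref{lem:decompression_suffix_link}; when $|\EBF(S)|>\righte(S)$ the single pass already runs in $O(|\EBF(S)|)$ time. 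The working space is $O(1)$ (the same climbing and comparison pointers as for MAWs), and running the procedure on $\rev{S}$ when $\lefte(S)<\righte(S)$ yields the $O(\emin)$ space bound.

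I expect the main obstacle to be output-sensitivity rather than correctness. In the MAW algorithm the out-edges of $\hat{u}$ lying in $\children(\hat{u})$ were the \emph{unsuccessful} comparisons charged away as overhead; here those very out-edges become the reported EBFs, so the hard-case iteration is automatically output-sensitive. The care is in verifying that, under the terminal assumption, \emph{every} out-edge of $\hat{u}$ and \emph{every} internal $u_i$ really does yield an EBF, so that no examined edge is wasted, and in confirming that the only genuinely non-output cost — the $O(1)$ bookkeeping for edges that emit nothing, such as the edge entering the sink — is exactly what the store-if-small dichotomy absorbs.
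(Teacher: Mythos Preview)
Your proposal is correct and is essentially the paper's own approach: the paper proves Theorem~\ref{theo:EBF_CDAWG} simply by pointing to the proof of Lemma~\ref{lem:EBF_property} and saying the MAW algorithm is modified accordingly, and what you spell out (iterating over $\children(\hat{u})$ at $u_1$ for hard links, emitting the single $au_ic$ at each internal $u_i$ for soft links, then applying the same store-if-small dichotomy and Lemma~\ref{lem:decompression_suffix_link}) is exactly that modification. One small point to tighten: in Cases~(B) and~(D) the node $u_1$ is gray, so $\str(u_1)$ is not left-maximal and hence not the middle $u$ of any EBF; your phrase ``through the same four cases (A)--(D)'' should therefore be read as inheriting the skip of $u_1$ in those cases, just as the MAW algorithm does.
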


\subsection{Computing length-bounded MAWs and EBFs}
\label{sec:length-bounded}

For any set $A \subseteq \Sigma^*$ of strings
and an integer $\ell > 1$,
let $A^{\geq \ell} = \{w \in A \mid |w| \geq \ell\}$
and $A^{\leq \ell} = \{w \in A \mid |w| \leq \ell\}$.

\begin{theorem} \label{theo:MAW_CDAWG_length}
  There exists a data structure of $O(\emin)$ space
  which, given a query length $\ell$, can output
  \begin{itemize}
    \item $\MAW(S)^{\geq \ell}$ in $O(|\MAW(S)^{\geq \ell}|+1)$ time,
    \item $\MAW(S)^{\leq \ell}$ in $O(|\MAW(S)^{\leq \ell}|+1)$ time,
    \item $\EBF(S)^{\geq \ell}$ in $O(|\EBF(S)^{\geq \ell}|+1)$ time,
    \item $\EBF(S)^{\leq \ell}$ in $O(|\EBF(S)^{\leq \ell}|+1)$ time,
  \end{itemize}
  with $O(1)$ working space.
\end{theorem}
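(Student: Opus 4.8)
The plan is to reduce each length-bounded query to a traversal over the \emph{middle nodes} of MAWs/EBFs restricted to a band of string-depths. The key starting observation is that, by Lemma~\ref{lem:maw_mr}, every $aub\in\MAW(S)$ has $u=\str(z)$ for an internal node $z$ of $\CDAWG(S)$, so all MAWs sharing a middle node have the \emph{same} length $|\str(z)|+2$; the same holds for EBFs. Hence $\MAW(S)^{\ge\ell}$ (resp.\ $\MAW(S)^{\le\ell}$) is exactly the set of MAWs whose middle node $u_i$ satisfies $|\str(u_i)|\ge\ell-2$ (resp.\ $\le\ell-2$). Along any fast-link path of $\LPTplus(S)$ the nodes $u_1,\dots,u_m$ appear in strictly decreasing string-depth as we climb toward the source, so within a path the in-band middle nodes form a contiguous ancestor segment; I would augment the $O(\emin)$-space structure of Theorem~\ref{theo:MAW_CDAWG} with a string-depth weighted-ancestor index on $\LPTplus(S)$ (which costs $O(\righte(S))$ space and needs only $O(1)$ working space per query) so that, from the deep endpoint $u_1$, I can jump directly to the shallowest/deepest in-band ancestor instead of scanning the whole path.

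For $\MAW(S)^{\ge\ell}$ I would handle the soft contributions ($i\ge2$) and the hard contribution ($i=1$) separately. For the soft part I presort the edges of $\LPTplus(S)$ with $m\ge2$ by the depth $|\str(u_2)|$ of their \emph{deepest soft node}; given $\ell$, a binary search isolates exactly the edges with $|\str(u_2)|\ge\ell-2$, and for each such edge I climb from $u_2$ upward, emitting MAWs until the string-depth drops below $\ell-2$. By Lemma~\ref{lem:soft_Weiner_link_MAW} every visited soft node yields at least one MAW, so this phase runs in $O(|\MAW(S)^{\ge\ell}|+1)$ time and touches no unproductive edge. For $\MAW(S)^{\le\ell}$ the symmetric plan presorts edges by the depth $|\str(u_m)|$ of their \emph{shallowest soft node}, isolates those with $|\str(u_m)|\le\ell-2$, uses the weighted-ancestor index to jump from $u_1$ to the deepest ancestor of string-depth $\le\ell-2$, and then climbs up to $v$ emitting MAWs; again every visited soft node is productive. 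The EBF queries use the very same traversals: by the proof of Lemma~\ref{lem:EBF_property}, at a soft node the unique following character $c$ gives one EBF $au_ic$, and at a hard node the \emph{present} right-extensions $\children(\hat u)$ each give an EBF $au_1b$, so both the soft and the hard EBF contributions are already output-sensitive and require no set difference.

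The hard part will be making the hard-Weiner-link ($i=1$) contribution to \emph{MAWs} output-optimal, i.e.\ achieving the additive ``$+1$'' with no hidden $\sigma$ factor and within $O(\emin)$ space. There the middle node $u_1$ reports exactly the \emph{absent} right-extensions $\children(u_1)\setminus\children(\hat u)$, and a naive sorted-list comparison costs $O(|\children(u_1)|)$ even when only one MAW (or none) is produced; yet I cannot afford to precompute and store all these difference lists, since their total size is the number of hard-event MAWs, which can be $\Theta(\sigma\,\emin)$. My plan is first to precompute a one-bit \emph{productivity flag} per node so that every hard event with $\children(u_1)=\children(\hat u)$ is skipped in $O(1)$ and only productive hard events are indexed (by $|\str(u_1)|$) and processed, guaranteeing at least one output per processed event. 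The delicate remaining step is to enumerate $\children(u_1)\setminus\children(\hat u)$ in time linear in its size rather than in $|\children(u_1)|$; I expect to resolve this by charging each \emph{matched} (present) comparison $b\in\children(\hat u)$ to the out-edge $(\hat u,b\cdots)$ of $\hat u$ and then bounding the matched comparisons incurred inside a single length band against the in-band MAWs emitted at $u_1$ and its path neighbours, using the strict monotonicity of string-depths along the $\LPTplus(S)$ paths. Verifying that this charging stays within $O(|\MAW(S)^{\ge\ell}|+1)$ (equivalently, that at most a constant amount of matched work is left uncharged per emitted MAW) is the crux of the argument; once it is established, combining the soft and hard phases, together with the symmetric treatment of $\MAW(\rev{S})$ to obtain the $\emin=\min\{\righte(S),\lefte(S)\}$ bound, yields all four claimed running times with $O(\emin)$ space and $O(1)$ working space.
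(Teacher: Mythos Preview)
Your soft-node analysis is sound, but the plan breaks down exactly where you flag it: the hard-event ($i=1$) enumeration for MAWs. Charging each matched comparison $b\in\children(\hat u)$ to the out-edge of $\hat u$ starting with $b$ is precisely the argument from Theorem~\ref{theo:MAW_CDAWG}, and there it bounds the \emph{total} matched work by $\righte(S)$, not by the output size. Inside a length band this charging gives you nothing new: one can have $k$ productive hard events in the band, each with $|\children(\hat u)|=\sigma-1$ and exactly one MAW, so the matched work is $\Theta(k\sigma)$ while the output is $k$. Your proposed fix, ``bounding matched comparisons against in-band MAWs emitted at $u_1$ and its path neighbours using monotonicity of string-depths,'' does not go through, because the out-edges of $\hat u$ you are charging to live at strictly larger string-depths than $u_1$ and need not correspond to any MAW in the current band (or indeed to any MAW at all). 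The productivity flag guarantees one output per event but cannot remove the per-event $\sigma$ factor.

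The paper sidesteps this difficulty entirely with a much simpler device you have not considered: a precomputed \emph{length threshold}. One computes once the largest $m$ with $|\MAW(S)^{\le m}|\le\emin$, stores those MAWs explicitly (encoded via Lemma~\ref{lem:decompression_suffix_link}) sorted by length, and answers any query with $\ell\le m$ directly from this list. For $\ell>m$ one simply runs the $O(\righte(S)+|\text{output}|)$ algorithm of Theorem~\ref{theo:MAW_CDAWG} restricted to the band; since by choice of $m$ the output now exceeds $\emin=\righte(S)$, the additive $\righte(S)$ is absorbed and the time is $O(|\MAW(S)^{\le\ell}|)$. The $\ge\ell$ case is symmetric. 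A level-ancestor structure on $\LPTplus(S)$ is used only to walk each fast-link path top-down in $O(1)$ per node so that one can stop early; no weighted-ancestor index or per-event set-difference machinery is needed. In short, your direction tries to make every hard event individually output-sensitive, which appears hard (and you have not done it), whereas the paper amortises the non-output-sensitive work against a guaranteed-large output via the threshold trick.
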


\begin{proof}
  We only describe how to compute length-bounded MAWs.
  We can compute length-bounded EBFs similarly.

  In our data structure for reporting $\MAW(S)^{\leq \ell}$, we precompute
  the largest integer $m$ with
  $|\MAW(S)^{\leq m}| \leq \emin$,
  and store the elements in $\MAW(S)^{\leq m}$
  using Lemma~\ref{lem:decompression_suffix_link},
  sorted by their lengths.
  Given a query length $\ell$,
  if $\ell \leq m$, then we can output the elements of $\MAW(S)^{\leq \ell}$
  by reporting the stored MAWs in increasing order,
  in $O(|\MAW(S)^{\leq \ell}|+1)$ time.
  Otherwise (if $\ell > m$), then we process each edge $(\hat{v},\hat{u})$ in $\LPTplus(S)$ in increasing order of $|\str(\hat{u})|$.
  We need to access the internal nodes in the path from $\slink(\hat{v})$ to $\slink(\hat{u})$ topdown, in $O(1)$ time each.
  This can be done in $O(1)$ time per node by augmenting $\LPTplus(S)$ with the level-ancestor (LA) data structure~\cite{BerkmanV94,BenderF00}.
  We stop processing the edge $(\hat{v},\hat{u})$ as soon as we find the shallowest ancestor $u_i$ of $u = \slink(\hat{u})$ such that $|\str(u_i)| > \ell -2$.
  Since $|\MAW(S)^{\leq \ell}| > \emin$ for any $\ell < m$,
  this algorithm outputs $\MAW(S)^{\leq \ell}$ in $O(|\MAW(S)^{\leq \ell}|+1)$ time.

  Computing $\MAW(S)^{\geq \ell}$ is symmetric,
  except that the LA data structure is not needed:
  We can check the nodes between $\slink(\hat{u})$ and $\slink(\hat{v})$
  in decreasing order of their lengths, in $O(1)$ time each,
  by simply climbing up the path.
\end{proof}

Theorem~\ref{theo:MAW_CDAWG_length} is an improvement and a generalization over the previous work~\cite{Chairungsee2012PhylogenyByMAW}
that gave an $O(\ell n)$-space data structure that reports $\MAW(S)^{\leq \ell}$ in $O(\sigma n)$ time.
\section{Computing MRWs}
Let $\MRW_k(S) = \{w \in \MRW(S) \mid \occ_S(w) = k\}$.
Then $\MRW_0(S) = \MAW(S)$ and $\bigcup_{k \ge 0}\MRW_k(S) = \MRW(S)$ hold.
We show the following lemma.
Recall that any string in $\MAW(S)$ or $\EBF(S)$ is of length at least $2$.
\begin{lemma}\label{lem:mrw}
  For any string $S$ with $S[1] = \sharp$ and $S[|S|] = \$$,
  $\MRW(S)\setminus \EBF(S) = \MAW(S)$.
  In other words,
  $\MRW(S) \cap \EBF(S) = \bigcup_{k\ge1} \MRW_k(S)$.
\end{lemma}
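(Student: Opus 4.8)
The plan is to prove the set equality by two inclusions, after first reducing it to a single structural fact: for every $w = aub \in \MRW(S)$ the two ``side'' conditions in the definition of an $\EBF$ hold automatically, so that whether $w \in \EBF(S)$ is governed solely by whether $aub$ occurs in $S$. I would begin with the easy direction by observing that a $\MAW$ and an $\EBF$ can never coincide: any $w \in \MAW(S)$ has $\occ_S(w) = 0$, whereas the definition of $\EBF$ requires $aub \in \Substr(S)$, i.e. $\occ_S(aub) \geq 1$. Hence $\MAW(S) \cap \EBF(S) = \emptyset$, which already yields $\MAW(S) \subseteq \MRW(S) \setminus \EBF(S)$. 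It then remains to establish the reverse inclusion, namely that every $\MRW$ which is not a $\MAW$ is an $\EBF$.

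For the reverse inclusion I would take an arbitrary $w = aub \in \MRW(S)$ and show that $\occ_S(a'u) \geq 1$ and $\occ_S(ub') \geq 1$ for suitable $a' \neq a$ and $b' \neq b$; granting this, $w \in \EBF(S)$ holds precisely when $aub \in \Substr(S)$, i.e. precisely when $\occ_S(w) \geq 1$, i.e. precisely when $w \notin \MAW(S)$. The engine is Lemma~\ref{lem:maw_mr}: $u$ is a maximal repeat of $S$, so $\occ_S(u) \geq 2$ and $u$ is both left- and right-maximal. When $u \neq \varepsilon$, I would rule out the prefix/suffix alternatives in the definition of maximality using the terminal symbols: if $u$ were a prefix of $S$ it would begin with $\sharp = S[1]$, and since $\sharp$ is unique $u$ could occur only once, contradicting $\occ_S(u) \geq 2$; symmetrically $u$ cannot be a suffix because of $\$ = S[|S|]$. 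Thus left-maximality must supply two distinct left-extensions $a_1 \neq a_2$ with $a_1 u, a_2 u \in \Substr(S)$, at least one of which differs from $a$; this is the desired $a'$, and the right-extension $b'$ is obtained symmetrically. The degenerate case $u = \varepsilon$ I would dispatch directly: here the side conditions only ask for some character $a' \neq a$ and some character $b' \neq b$ occurring in $S$, and since $S$ contains the two distinct terminals $\sharp$ and $\$$, at least two distinct characters occur in $S$, so such $a'$ and $b'$ always exist. Combining both cases gives the uniform claim and completes $\MRW(S) \setminus \EBF(S) = \MAW(S)$.

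The ``in other words'' reformulation then follows purely set-theoretically. Since $\MRW(S) = \MRW_0(S) \sqcup \bigcup_{k \geq 1} \MRW_k(S)$ with $\MRW_0(S) = \MAW(S)$, and we have just shown $\MRW(S) \setminus \EBF(S) = \MAW(S)$, taking complements within $\MRW(S)$ yields $\MRW(S) \cap \EBF(S) = \MRW(S) \setminus \MAW(S) = \bigcup_{k \geq 1} \MRW_k(S)$.

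The step I expect to be the main obstacle is the translation of ``$u$ is maximal'' into the precise $\EBF$ witnesses, since maximality only guarantees two distinct extensions on each side. The care lies in (i) excluding the prefix/suffix escape clauses in the definition of maximality via the unique terminals $\sharp$ and $\$$, so that these extensions are genuine characters rather than boundary effects, (ii) arguing that two \emph{distinct} extensions force one of them to differ from the prescribed $a$ (respectively $b$), and (iii) handling $u = \varepsilon$ separately. The remaining manipulations are routine bookkeeping.
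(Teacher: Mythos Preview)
Your proposal is correct, but it takes a slightly different route from the paper. The paper does not invoke Lemma~\ref{lem:maw_mr} at all; instead, for $aub \in \MRW_k(S)$ with $k \geq 1$ it works directly with the strings $au$ and $ub$: from $\occ_S(au) > \occ_S(aub) = k$ it gets $\occ_S(au) \geq 2$, infers that $au$ cannot end in $\$$ and hence is never a suffix of $S$, so every occurrence of $au$ has a right-extension, and since $\occ_S(au) > \occ_S(aub)$ some such extension $b'$ differs from $b$ (and symmetrically for $a'$). This yields the stronger witnesses $aub', a'ub \in \Substr(S)$ uniformly, without case-splitting on $u = \varepsilon$. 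Your argument instead passes through the maximal-repeat structure of $u$, which is equally valid but requires (i) the extra appeal to Lemma~\ref{lem:maw_mr}, (ii) ruling out the prefix/suffix clauses in the definition of maximality, and (iii) a separate treatment of $u = \varepsilon$. The paper's approach is a bit shorter and gives a slightly stronger intermediate conclusion; yours makes the connection to maximal repeats more explicit.
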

\begin{proof}
  Clearly, $\MAW(S) \cap \EBF(S) = \emptyset$.
  It suffices to show that $\bigcup_{k\ge1} \MRW_k(S) \subseteq \EBF(S)$.
  Let $aub$ be an element in $\MRW_k(S)$ for some $k \ge 1$ where $a, b\in \Sigma$ and $u \in \Sigma^\star$.
  Since $aub$ is a minimal rare word occurring $k$ times in $S$,
  $\occ_S(au) \ge k+1$ and $\occ_S(ub) \ge k+1$.
  Also, since $k+1 \ge 2$, the last character of $au$ is not equal to $\$$, which is a unique character.
  Similarly, the first character of $ub$ is not equal to $\sharp$.
  Thus, any occurrence of $au$ (resp., $ub$) in $S$ is not a suffix (resp., prefix) of $S$.
  Hence, there must be characters $b' \ne b$ and $a' \ne a$
  such that $aub', a'ub \in \Substr(S)$.
  Therefore, $aub$ is an extended bispecial factor.
\end{proof}

Next, we estimate the number of minimal rare words occurring in $S$.
\begin{lemma}\label{lem:numofMRW}
  $|\bigcup_{k\ge1} \MRW_k(S)| \le \emin$.
\end{lemma}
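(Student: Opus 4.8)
The plan is to exhibit an injection from $\bigcup_{k\ge1}\MRW_k(S)$ into the edge set $\ECDAWG$ of $\CDAWG(S)$; this immediately gives $|\bigcup_{k\ge1}\MRW_k(S)|\le|\ECDAWG|=\righte(S)$, and I would upgrade $\righte(S)$ to $\emin=\min\{\righte(S),\lefte(S)\}$ by a reversal symmetry. Concretely, the map $aub\mapsto b\rev{u}a$ is a bijection between $\MRW_k(S)$ and $\MRW_k(\rev{S})$ for every $k$: since $\occ_{\rev{S}}(\rev{w})=\occ_S(w)$, the two defining inequalities $\occ_S(au)>\occ_S(aub)$ and $\occ_S(ub)>\occ_S(aub)$ are exactly the ones defining $b\rev{u}a$ as an MRW for $\rev{S}$, and the occurrence count $k$ is preserved. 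Hence $|\bigcup_{k\ge1}\MRW_k(S)|=|\bigcup_{k\ge1}\MRW_k(\rev{S})|$, so once the injection yields the bound $\righte(\cdot)$ for an arbitrary string, applying it to $\rev{S}$ also gives $\lefte(S)=\righte(\rev{S})$, and the minimum follows.

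To build the injection, I would fix an occurring MRW $w=aub$ with $\occ_S(aub)=k\ge1$. First I would argue that the prefix $au$ is right-maximal in $S$: from $\occ_S(au)>\occ_S(aub)\ge1$ there is an occurrence of $au$ not followed by $b$, and this occurrence is not a suffix of $S$ because $ub\in\Substr(S)$ forces $u$ (hence $au$) not to end with the unique terminal $\$$; thus that occurrence is followed by some character $b'\neq b$, witnessing two distinct right-extensions. Since $au$ is right-maximal, reading $au$ from the source ends exactly at a node $v$ of $\CDAWG(S)$ (not in the interior of an edge), so $au$ is one of the strings represented by $v$; and because $aub$ occurs, $b$ is a right-extension of $au$, i.e.\ $v$ has an out-edge whose label starts with $b$. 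I define the image of $aub$ to be this out-edge of $v$ in $\CDAWG(S)$.

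The crux is injectivity, and this is where Lemma~\ref{lem:maw_mr} does the real work. An out-edge determines its source node $v$ and its first label character $b$, so I must show $v$ and $b$ determine $aub$. The claim is $u=\str(\slink(v))$, which forces $au$ to be the unique shortest string represented by $v$ and thereby fixes $a$. To prove it, note $au$ is a suffix of $\str(v)$ with $|au|>|\str(\slink(v))|$ (as $au$ is represented by $v$), so $u=(au)[2..]$ is a proper suffix of $\str(v)$ with $|u|\ge|\str(\slink(v))|$. If $|u|>|\str(\slink(v))|$, then $u$ is itself represented by $v$ and, being a proper suffix of $\str(v)$, is not left-maximal; but by Lemma~\ref{lem:maw_mr} the middle $u$ is a maximal repeat, hence left-maximal, a contradiction. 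Therefore $|u|=|\str(\slink(v))|$ and $u=\str(\slink(v))$, so $(v,b)$ recovers $aub$ uniquely and the map is injective, giving $|\bigcup_{k\ge1}\MRW_k(S)|\le\righte(S)$; with the symmetry above this yields $\le\emin$. I expect this injectivity step to be the main obstacle: without the maximal-repeat constraint on $u$, several left-extensions $a$ could share the same node $v$ and character $b$ (as indeed happens for general EBFs, cf.\ Lemma~\ref{lem:EBF_property}), so the argument hinges precisely on using Lemma~\ref{lem:maw_mr} to pin $u$ down to $\str(\slink(v))$.
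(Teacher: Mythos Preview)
Your proof is correct and follows essentially the same approach as the paper: both define the same injection (sending $aub$ to the out-edge of the CDAWG node reached by $au$ whose label starts with $b$) and invoke the same reversal symmetry to pass from $\righte(S)$ to $\emin$. The only cosmetic difference is in the injectivity step: the paper argues by direct contradiction on occurrence counts (if two MRWs shared an edge, the shorter would be a proper suffix of the longer with the same number of occurrences, violating the MRW inequality), whereas you invoke Lemma~\ref{lem:maw_mr} to pin down $u=\str(\slink(v))$---which is exactly the structural fact the paper extracts \emph{after} its proof as a corollary.
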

\begin{proof}
  For any $w \in \bigcup_{k\ge1} \MRW_k(S)$,
  the locus of $w$ is the position on some edge $(v_1, v_2)$ 
  exactly one character down from $v_1$
  since $\occ_S(w[1..|w|-1]) > \occ_S(w)$.
  For the sake of contradiction, we assume that the locus of $w_1 = a_1u_1b \in \bigcup_{k\ge1} \MRW_k(S)$
  is the same as that of another $w_2 = a_2u_2b \in \bigcup_{k\ge1} \MRW_k(S)$
  where $a_1, a_2, b \in \Sigma$ and $u_1, u_2 \in \Sigma^\star$.
  We further assume $|w_1| \le |w_2|$ w.l.o.g.
  Since $a_1u_1$ and $a_2u_2$ reach the same CDAWG node, $a_1u_1$ is a suffix of $a_2u_2$.
  Since $w_1 \ne w_2$, $|w_1| < |w_2|$ holds, and thus, $w_1$ is a proper suffix of $w_2$.
  This contradicts that $\occ_S(w_2) = \occ_S(w_1)$ and $w_2$ is an MRW.
  Hence, the number of MRWs in $\bigcup_{k\ge1} \MRW_k(S)$ is never larger than $|\ECDAWG|$.
  The above argument holds even if the reversal of $S$ is considered.
  Therefore, the upper bound is $\min\{|\ECDAWG|, |\ECDAWG^R|\} = \emin$.
\end{proof}

The proof of Lemma~\ref{lem:numofMRW} implies that
if the locus of an MRW $aub$ is on an edge $(v_1, v_2)$, then $au$ is the \emph{shortest} string represented by the node $v_1$.
Namely, $\slink(v_1) = u$.
From this fact, we can represent each MRW $aub$ as its corresponding edge $(v_1, v_2)$
where
the character label of the hard Weiner link from $\slink(v_1)$ to $v_1$ is $a$,
the longest string represented by $\slink(v_1)$ is $u$, and
the first character of the edge label of $(v_1, v_2)$ is $b$.
Also, by Lemma~\ref{lem:decompression_suffix_link}, we can restore the MRW $aub$ from $(v_1, v_2)$
in time linear in its length.
Thus, the following lemma holds.
\begin{lemma}
  There exists a data structure of $O(\emin)$ space that can output all MRWs occurring in $S$
  in $O(|\bigcup_{k\ge 1}\MRW_k(S)|)$ time with $O(1)$ working space.
\end{lemma}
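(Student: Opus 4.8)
The plan is to turn the injective correspondence established in the discussion following Lemma~\ref{lem:numofMRW} into an algorithm that scans the edges of $\CDAWG(S)$ once, decides in $O(1)$ time whether each edge carries an MRW, and stores the constant-size representations of those that do. First I would fix the data structure: on top of the edge-sorted $\CDAWG(S)$ with suffix links and the CDAWG-grammar $\mathcal{G}_{\CDAWG}$ (needed for Lemma~\ref{lem:decompression_suffix_link}), I would store for every node $v$ its occurrence count $\occ_S(\str(v))$ and the first character of the shortest string of $v$ (equivalently, the label of the hard Weiner link entering $v$). All of these components occupy $O(\righte(S))$ space; and since MRWs are reversal-symmetric---$aub$ is an MRW of $S$ with $\occ_S(aub)=k$ iff $b\rev{u}a$ is an MRW of $\rev{S}$ with $\occ_{\rev{S}}(b\rev{u}a)=k$---I may assume w.l.o.g.\ that $\emin = \righte(S)$ and work on $\CDAWG(S)$, so the total space is $O(\emin)$.

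Next I would make precise which edges correspond to MRWs. By the remark after Lemma~\ref{lem:numofMRW}, an MRW $aub$ occurring in $S$ sits one character down a unique edge $(v_1,v_2)$ with $\str(\slink(v_1)) = u$, with $a$ the label of the hard Weiner link into $v_1$, and with $b$ the first character of the label of $(v_1,v_2)$; conversely, every edge whose tail $v_1$ is not the source (equivalently, $v_1$ is internal) yields exactly one such candidate triple $(a,u,b)$. For such a candidate the first defining inequality $\occ_S(au) > \occ_S(aub)$ is automatic: $au$ is the shortest string of the internal, hence branching, node $v_1$, so $\occ_S(au) = \occ_S(\str(v_1))$, whereas $aub$ lies on the single out-edge $(v_1,v_2)$ and thus has $\occ_S(aub) = \occ_S(\str(v_2))$, which is strictly smaller because $v_1$ has at least two out-edges. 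The second inequality $\occ_S(ub) > \occ_S(aub)$ is the only genuine test: since $ub = \str(\slink(v_1))\,b$ reaches the $b$-child of $\slink(v_1)$, the candidate is an MRW iff the occurrence count of that $b$-child strictly exceeds $\occ_S(\str(v_2))$. Using the stored occurrence counts and an $O(1)$-time (or $O(\log\sigma)$-time) child lookup by first character, this comparison takes constant time per edge.

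With the characterization in hand, the rest is bookkeeping. During construction I would scan all $\righte(S)$ edges, apply the constant-time test, and append the surviving edges $(v_1,v_2)$---each a constant-size record from which $a$, $u$, and $b$ are recovered---to a list $L$. By Lemma~\ref{lem:numofMRW} we have $|L| = |\bigcup_{k\ge1}\MRW_k(S)| \le \emin$, so $L$ fits in the $O(\emin)$ budget. To answer the output query I would simply walk $L$, emitting each record in $O(1)$ time using a single pointer of working space; if explicit strings are wanted, each $aub$ is produced from its record by decompressing $u = \str(\slink(v_1))$ via Lemma~\ref{lem:decompression_suffix_link} and flanking it with $a$ and $b$, in time linear in $|aub|$. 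This yields the claimed $O(|\bigcup_{k\ge1}\MRW_k(S)|)$ output time with $O(1)$ working space.

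The main obstacle is the argument of the second paragraph: proving the clean two-part characterization and, in particular, that condition (2) is equivalent to a single occurrence-count comparison between the $b$-children of $\slink(v_1)$ and of $v_1$. I expect the subtle points to be confirming that the candidate edge and triple are well defined exactly when $v_1$ is internal---so that $au$ really is the shortest string of $v_1$ and the automatic inequality holds---and verifying the reversal symmetry that lets the space bound improve from $\righte(S)$ to $\emin$; everything else is a direct consequence of Lemmas~\ref{lem:numofMRW} and~\ref{lem:decompression_suffix_link}.
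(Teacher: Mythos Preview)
Your proposal is correct and follows the same route as the paper: represent each occurring MRW by its unique edge $(v_1,v_2)$ (the injection of Lemma~\ref{lem:numofMRW}), store these edges in a list of size at most $\emin$, and recover the string $aub$ via Lemma~\ref{lem:decompression_suffix_link}. The paper is terser---it states only the representation and the decompression step---whereas you additionally spell out how to \emph{build} the list, namely by scanning every CDAWG edge with non-source tail and testing the single non-automatic condition $\occ_S(ub) > \occ_S(aub)$ via the stored occurrence counts. That extra detail is sound (your counterexample-style reasoning that the test is genuinely needed is correct: not every such edge yields an MRW), and since preprocessing time is not part of the lemma's claim, both write-ups establish the same result by the same mechanism.
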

If $k$ is fixed in advance, we can store $\MRW_k(S)$ in $O(\emin)$ space and can output them in $O(|\MRW_k(S)|)$ time.
We obtain the next corollary by fixing $k=1$.
\begin{corollary}
  There exists a data structure of $O(\emin)$ space that can output all MUSs for string $S$
  in $O(|\MUS(S)|)$ time with $O(1)$ working space.
\end{corollary}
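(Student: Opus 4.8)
The plan is to obtain the corollary from the observation stated immediately above it, namely that for any fixed $k$ the set $\MRW_k(S)$ can be stored in $O(\emin)$ space and reported in $O(|\MRW_k(S)|)$ time with $O(1)$ working space, by specializing to $k=1$. The only substantive point is to verify that this specialization recovers exactly the MUSs, i.e. that $\MUS(S) = \MRW_1(S)$; the data-structural machinery is then inherited verbatim.

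First I would establish the identity $\MUS(S) = \MRW_1(S)$ by unfolding the two definitions. Fix a non-trivial $w = aub$ with $a,b \in \Sigma$ and $u \in \Sigma^*$, and note $w[1..|w|-1] = au$ and $w[2..|w|] = ub$. If $w \in \MRW_1(S)$, then $\occ_S(w) = 1$, and the MRW conditions $\occ_S(au) > \occ_S(w)$ and $\occ_S(ub) > \occ_S(w)$ give $\occ_S(w[1..|w|-1]) \ge 2$ and $\occ_S(w[2..|w|]) \ge 2$, which together with $\occ_S(w) = 1$ is precisely the MUS condition; hence $w \in \MUS(S)$. Conversely, if $w \in \MUS(S)$, then $\occ_S(w) = 1$ while $\occ_S(au), \occ_S(ub) \ge 2 > \occ_S(w)$, so $w$ satisfies case (2) of the MRW definition and occurs once, placing $w \in \MRW_1(S)$. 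Thus the two sets coincide, and in particular $|\MUS(S)| = |\MRW_1(S)|$.

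With this identity I would invoke the fixed-$k$ construction at $k = 1$, which stores $\MRW_1(S) = \MUS(S)$ in $O(\emin)$ space and outputs its elements in $O(|\MRW_1(S)|) = O(|\MUS(S)|)$ time with $O(1)$ working space. The one ingredient worth spelling out is how this construction isolates the count-$1$ words: by the proof of Lemma~\ref{lem:numofMRW}, every MRW $aub$ is represented by the edge $(v_1,v_2)$ of $\CDAWG(S)$ carrying its locus, and all strings whose locus lies on that edge share the occurrence count of $v_2$, so $\occ_S(aub)$ equals the occurrence count of $v_2$. Precomputing one such count per node costs $O(|\VCDAWG|) = O(\emin)$ space, and retaining only the edges of count $1$ during preprocessing isolates $\MUS(S)$ without any asymptotic overhead; the strings themselves are then recovered via Lemma~\ref{lem:decompression_suffix_link}.

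I expect no genuine obstacle here: the corollary is driven entirely by the definitional equivalence $\MUS(S) = \MRW_1(S)$, and all reporting, space, and decompression guarantees are already established for general fixed $k$. The only mild care required is to confirm that the per-node occurrence counts underpinning the choice $k=1$ fit within $O(\emin)$ space, which holds since there is one integer per CDAWG node and $|\VCDAWG| - 1 \le \emin$ (the number of hard Weiner links, as well as the number of primary edges, is $|\VCDAWG|-1$, so this count is bounded by both $\righte(S)$ and $\lefte(S)$).
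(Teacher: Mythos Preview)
Your proposal is correct and follows exactly the paper's approach: the paper derives the corollary in one line by fixing $k=1$ in the preceding statement about $\MRW_k(S)$, and you do the same while additionally spelling out the routine verification that $\MUS(S)=\MRW_1(S)$ and the occurrence-count bookkeeping.
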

\section{Conclusions and Future Work}

In this paper, we proposed space-efficient $O(\emin)$-size data structures
that can output MAWs, EBFs, and MRWs for an input string $S$ of length $n$
in time linear in the output size,
where $\emin$ denotes the minimum of the left-extensions and the right-extensions
of the maximal repeats in $S$.
The key tools of our method are the $O(\emin)$-size grammar-representation of the CDAWG,
and the extended longest path tree of size $O(\emin)$.

Our future work includes the following:
Is it possible to extend our method for reporting length-bounded MAWs/EBFs of Section~\ref{sec:length-bounded}
to the problem of reporting MAWs/EBFs of length \emph{exactly} $\ell$ for a query length $\ell > 1$?
This would be a space-improvement of the previous work~\cite{charalampopoulos2018extended}.
Okabe et al.~\cite{OkabeMNIB23} proposed a (non-compacted) DAWG-based data structure
that can report the symmetric difference of the sets of MAWs for multiple strings, using $O(n)$ space and time linear in the output size,
where $n$ is the total length of the strings.
Can our CDAWG-based data structures be extended to this problem?

\section*{Acknowledgments}
This work was supported by JSPS KAKENHI Grant Numbers
JP20H05964, JP23K24808, JP23K18466~(SI),
JP23H04381~(TM),
JP20H00595, JP20H05963~(HA).

\end{document}